\documentclass{article}

\usepackage{xspace}
\usepackage{tabularx}
\newcolumntype{L}{>{\raggedright\arraybackslash}X}
\usepackage[usenames,dvipsnames,table]{xcolor}
\usepackage{times}
\usepackage{soul}
\usepackage{url}
\usepackage[hidelinks]{hyperref}
\usepackage[utf8]{inputenc}
\usepackage[small]{caption}
\usepackage{graphicx}
\usepackage{amsmath,amsthm}
\usepackage{booktabs}
\usepackage{algorithm}
\usepackage{algpseudocode}
\usepackage[switch]{lineno}
\usepackage{tikz}
\usepackage{mathtools}
\usepackage{multirow}
\usepackage{cite} % sorting citation numbers
\usepackage{kotex}
\usepackage{pgfplots}
\pgfplotsset{compat=1.18}
\usepackage{tcolorbox}
\usepackage{enumitem}

\newtheorem{problem}{Problem}
\newtheorem{definition}{Definition}
\newtheorem{example}{Example}
\newtheorem{lemma}{Lemma}
\newtheorem{corollary}{Corollary}
\newtheorem{theorem}{Theorem}
\newcommand{\RecuRegex}{{\mdseries\scshape Re\-Syn}\xspace}
\newcommand{\resyn}{{\mdseries\scshape Re\-Syn}\xspace}
\newcommand{\setregex}{{\mdseries\scshape Set2Regex}\xspace}
\newcommand{\partitioner}{{\mdseries\scshape Partitioner}\xspace}
\newcommand{\segmenter}{{\mdseries\scshape Segmenter}\xspace}
\newcommand{\router}{{\mdseries\scshape Router}\xspace}
\newcommand{\prax}{{\mdseries\scshape Prax}\xspace}
\newcommand{\forest}{{\mdseries\scshape Forest}\xspace}
\newcommand{\splitregex}{{\mdseries\scshape Split\-Regex}\xspace}

\newcommand{\sregex}{\textit{Structured\-Regex}\xspace}
\newcommand{\snort}{\textit{Snort}\xspace}
\newcommand{\regexlib}{\textit{RegExLib}\xspace}
\newcommand{\polyglot}{\textit{Polyglot}\xspace}
\newcommand{\python}{\textit{Python}\xspace}

\newcommand{\hse}{Hierarchical Set Encoder\xspace}

\graphicspath{{images/}}

\definecolor{deltaBg}{RGB}{220,230,255}
\newcommand{\oraclerow}{\rowcolor{gray!30}}
\newcommand{\ourrow}{\rowcolor{deltaBg}}

\usepackage{ijcai26}

\title{\resyn: A Generalized Recursive Regular Expression Synthesis Framework}

\author{
Seongmin Kim$^1$\and
Hyunjoon Cheon$^2$\and
Su-Hyeon Kim$^{3}$\and
Yo-Sub Han$^3$\And
Sang-Ki Ko$^1$\\
\affiliations
$^1$University of Seoul\\
$^2$Gyeongsang National University\\
$^3$Yonsei University\\
\emails
\{mrseongminkim, sangkiko\}@uos.ac.kr,
hyunjooncheon@gnu.ac.kr,
\{suhyeon.kim, emmous\}@yonsei.ac.kr
}

\begin{document}
\maketitle

\begin{abstract}
Existing Programming-By-Example (PBE) systems often rely on simplified benchmarks that fail to capture the high structural complexity of real-world regexes, such as deeper nesting and frequent use of union operations.
To overcome the resulting performance drop, we propose \resyn, a synthesizer-agnostic divide-and-conquer framework that decomposes complex synthesis problem into manageable sub-problems.
We also introduce \setregex, a parameter-efficient synthesizer capturing the permutation invariance of examples.
Experimental results demonstrate that \resyn significantly boosts accuracy across various synthesizers, and its combination with \setregex establishes a new state-of-the-art on challenging real-world benchmark.
The complete source code, datasets, and pre-trained model checkpoints are publicly available at \url{https://github.com/mrseongminkim/ReSyn}.
\end{abstract}
\section{Introduction}
Regular expressions (regexes) are a fundamental tool in a wide range of domains, including text processing, data validation, and network security.
Despite their expressive power, regexes are notoriously difficult to write correctly due to their complex and unintuitive syntax, even for experienced practitioners.
To mitigate this difficulty, Programming-by-Example (PBE) approaches, which automatically synthesize regexes from input-output examples, have been studied extensively.

Recently, deep learning-based synthesizers leveraging large language models (LLMs) and encoder-decoder architectures have demonstrated promising results, achieving higher accuracy and significantly faster synthesis compared to traditional enumeration-based methods~\cite{VaduguruFP24}.
However, we argue that existing neural approaches suffer from a fundamental limitation: they are evaluated under overly simplified settings that fail to reflect the true structural complexity of real-world regexes.

\begin{table}[ht]
\centering
\setlength{\tabcolsep}{3.4pt}
\begin{tabular}{llccc}
\toprule
Dataset & Category &  Depth & Nodes & Alt \\
\midrule
\cite{VaduguruFP24} & Synthetic  & 3.39 & 6.45 & 0.04 \\
\midrule
\cite{ferreira2021forest} & Domain  & 3.03 & 8.02 & 0.05 \\
\midrule
\multirow{2}{*}{\snort\footnotemark} & Simplified  & 3.05 & 8.11 & 0.05 \\
                        & Real-World  & 3.13 & 9.60 & 0.06 \\
\midrule
\multirow{2}{*}{\textbf{\regexlib}\footnotemark[\value{footnote}]} & Simplified  & 3.31 & 11.04 & 0.48 \\
                                    & \textbf{Real-World}  & \textbf{4.40} & \textbf{23.69} & \textbf{1.25} \\
\bottomrule
\end{tabular}
\caption{Comparison of structural complexity.
To ensure consistency, all regexes were standardized using our Canonicalizer (\S\ref{subsec:canonicalization}).
Statistics are based on unique AST structures, abstracting concrete literals.
Simplified datasets are from \splitregex~\protect\cite{JALC-2025-157}.
Alt denotes the average number of Union operators.}
\label{tab:complexity}
\end{table}
\footnotetext{\url{https://github.com/lorisdanto/automatark}.}

Our analysis reveals a critical disconnect between existing benchmarks and practical regexes.
Prior works often rely on synthetic, domain-specific, or simplified datasets that restrict character classes and operators, artificially lowering synthesis difficulty.
As summarized in Table~\ref{tab:complexity}, genuine regexes from \regexlib are structurally deeper and semantically richer, featuring over $2\times$ more abstract syntax tree (AST) nodes than simplified versions and $3.6\times$ more than synthetic benchmarks.
Furthermore, while existing benchmarks are predominantly linear, real-world regexes heavily utilize the Union operator.
This discrepancy suggests that state-of-the-art models may overfit to simplified structures, failing to generalize to the recursive and nested nature of true practical instances.

This structural complexity poses severe challenges to existing paradigms. Enumeration-based solvers scale poorly with regex depth, while neural sequence-to-sequence (Seq2Seq) models suffer from two architectural inefficiencies: they flatten hierarchical ASTs into linear sequences, losing long-range dependencies, and they treat input examples as ordered sequences, violating the permutation invariance of sets. Although some divide-and-conquer strategies have been explored, they typically rely on rigid heuristics, such as assuming top-level Concatenation, and lack the true recursive capabilities needed for complex scenarios.

To overcome these limitations, we propose \resyn, a three-stage framework spanning data, model, and algorithm.
We first introduce a Regex Canonicalizer to standardize diverse regexes into a normal form for efficient training.
Building on this, we propose \setregex, a parameter-efficient (10M) base synthesizer equipped with a \hse that explicitly enforces permutation invariance.
Finally, we present a learning-based recursive framework comprising three specialized neural modules (\router, \partitioner, and \segmenter) that learn to adaptively decompose synthesis tasks.
This approach addresses the inherent complexity of optimal decomposition, which we prove to be NP-hard.
Our empirical results demonstrate that \resyn significantly improves synthesis success rates on complex benchmarks, effectively bridging the gap between simplified research settings and real-world requirements.

Our contributions are summarized as follows:
\begin{itemize}
    \item We quantitatively demonstrate the gap between existing benchmarks and real-world regexes, showing that they lack the structural complexity found in practice.
    \item We propose \setregex, a parameter-efficient model (10M) that matches the performance of a large-scale baseline (300M) by leveraging a \hse.
    \item We introduce \resyn, a recursive divide-and-conquer framework driven by learnable decomposition modules. Empirical results show that this framework significantly improves synthesis success rates on complex benchmarks like \regexlib.
    \item We prove that finding an optimal decomposition for regex synthesis is an NP-hard problem, highlighting the inherent computational intractability of the task and motivating the need for learned heuristics.
\end{itemize}

\section{Related Work}
Existing regex synthesis approaches can be broadly categorized into monolithic example-based methods and compositional divide-and-conquer strategies.
We position our framework within the latter, explicitly addressing the scalability and structural limitations inherent in the former.

\paragraph{Example-Based and Neural Regex Synthesis.}
Traditional example-based synthesis treats the task monolithically, inferring a complete regex in a single step via exhaustive or evolutionary search~\cite{BartoliDDMS14,LeeSO16}.
However, these methods struggle to scale with the combinatorial search space of complex patterns~\cite{OncinaG92,Gold78,LangPP98}.
Recent neural approaches frame synthesis as a sequence-to-sequence (Seq2Seq) problem, often leveraging frameworks like the Rational Speech Act (RSA) to improve accuracy~\cite{VaduguruFP24}.
Despite their inference speed, monolithic neural models face two critical hurdles.
First, relying solely on sequential token generation hinders capturing deep, nested structures.
Second, they typically ignore the permutation-invariant nature of input examples by flattening them into fixed sequences.
This artificial linearity forces the model to learn spurious orderings instead of underlying shared structural patterns.

\paragraph{Divide-and-Conquer Synthesis.}
Alur et al.~\cite{AlurRU17} proposed synthesizing programs by hierarchically partitioning examples, where the induced decision structure mirrors the program's control flow.
Farzan et al.~\cite{FarzanN21CAV,FarzanN21PLDI} further demonstrated that decomposing synthesis tasks enables parallel solving, significantly reducing overall synthesis time.
Similar principles appear in approaches that synthesize partial solutions over subsets of examples and compose them into a global solution~\cite{BarkePP20,ShrivastavaLT21}.

This paradigm has also been adapted for regex synthesis.
Raza et al.~\cite{raza2015compositional} decompose tasks using phrase-structure parsing of natural language descriptions, allowing sub-regexes to be synthesized independently.
Chen et al.~\cite{chen2023data} leverage LLMs to generate initial regex sketches, refining them by aligning components with substrings of positive examples.
While these approaches highlight the benefits of decomposition, they rely on auxiliary supervision—such as natural language descriptions or externally generated sketches—limiting their applicability in purely example-driven settings.

Recent works have attempted decomposition using only input-output examples. 
To handle structurally diverse patterns that inherently require Union operators, earlier evolutionary approaches~\cite{BartoliDMT16} bypassed explicit structural decomposition by employing a data-level separate-and-conquer strategy, iteratively extracting subsets of examples and joining the resulting sub-expressions with top-level Unions.
In contrast, recent decomposition-focused methods impose strong structural priors: \forest~\cite{ferreira2021forest} utilizes heuristic-based common substring identification, while \splitregex~\cite{JALC-2025-157} employs a neural model to predict segmentation points. 
However, a critical limitation of these modern approaches is that both rigidly assume that the top-level operator is always a Concatenation. 
This assumption severely hinders recursive decomposition. 
Because Concatenation is associative, repeatedly applying a Concatenation-only splitter results in a flat sequence of sub-problems rather than a deep hierarchical structure. 
Consequently, these methods lack the flexibility to handle nested structures where Union operators appear at the top level or are interleaved with Concatenations.

\paragraph{Positioning of Our Work.}
\resyn advances compositional synthesis by introducing a generalized recursive framework.
Unlike prior works that rely on fixed structural assumptions, we employ a learnable \router that dynamically determines whether to decompose the problem via Concatenation (using a \segmenter) or Union (using a \partitioner).
This flexibility allows for a principled, bottom-up composition that mirrors the complex syntactic structure of real-world regexes, achieved without requiring additional supervision.
\section{Proposed Method}
We present our recursive framework for example-based regex synthesis.
We begin by introducing the definitions and notation used throughout the paper, followed by a formal problem formulation.
We then describe our regex canonicalization pipeline, which standardizes diverse expressions into a unified representation.
Next, we present \setregex, a hierarchical encoder-decoder model that serves as the base synthesizer.
Building upon this foundation, we introduce the \resyn framework, a recursive divide-and-conquer approach that decomposes complex synthesis problems into simpler sub-problems.
Finally, we provide complexity analysis demonstrating that finding optimal decompositions is NP-hard, which motivates our learning-based approach.

\paragraph{Definitions and Notation.}
Throughout the paper, we use standard notation for alphabets, strings, and regular expressions.
An alphabet $\Sigma$ is a finite set of symbols, and $\Sigma^*$ denotes the set of all finite strings over $\Sigma$.
A regular expression (regex) $R$ describes a language $L(R) \subseteq \Sigma^*$.
The formal, recursive definition of regex syntax and semantics, including all operators and character classes, is provided in Appendix~\ref{sec:regex_def}.
We restrict our attention to regular constructs and exclude non-regular features such as backreferences.

We formally define the regex synthesis problem as follows:
Given a set of positive examples $P \subseteq \Sigma^*$ and negative examples $N \subseteq \Sigma^*$, 
the goal is to find a concise regex $R$ such that $P \subseteq L(R)$ and $N \cap L(R) = \emptyset$.
While a trivial solution exists by simply enumerating all positive examples with union ($w_1 | w_2 | \cdots | w_n$ for $P = \{w_1, \ldots, w_n\}$), such a regex lacks generalization and is impractically verbose. 
Therefore, we seek a regex that is both correct and concise, capturing the underlying pattern rather than merely memorizing examples~\cite{ijcai2024p717}.

\paragraph{Three-Stage End-to-End Neural Recursive Synthesis Framework: \resyn.}
\label{subsec:canonicalization}
We propose \resyn, a comprehensive framework designed to overcome the structural complexity of real-world regexes.
It is composed of three synergistic stages: data canonicalization, a synthesizer-agnostic recursive algorithm, and a specialized base model.

First, we address the issue of data quality. Real-world regex data exhibits significant syntactic diversity even when expressing identical languages, as different authors may write the same pattern in structurally different ways~\cite{DavisMCSL19}.
Such unnecessary syntactic variation introduces noise that hinders model learning and reduces data efficiency.
To address this challenge, we introduce a regex canonicalization pipeline that transforms diverse expressions into a standardized form.
Full details of the pipeline are provided in Appendix~\ref{sec:canonicalization}.

\label{sec:set2regex}
While our framework is compatible with any regex synthesizer, we propose \setregex, a parameter-efficient model designed to serve as a robust base synthesizer by capturing the permutation invariance of examples.
As illustrated in Figure~\ref{fig:set2regex}, the model utilizes a \hse that processes examples in two stages.
First, a character-level Transformer and Pooling by Multihead Attention (PMA)~\cite{lee2019set} compress each string into a fixed-size embedding $h_i$. 
These embeddings are processed by a string-level Transformer, which replaces traditional positional encodings with type embeddings indicating whether the string is a positive or negative example.
The resulting contextualized embeddings $\{h'_i\}$ are subsequently aggregated into a global context vector $c$.
The decoder generates the regex using a two-stage attention mechanism, attending first to the global context $c$ and then to the contextualized string embeddings $\{h'_i\}$.

\begin{figure}[ht!]
    \centering
    \includegraphics[width=\linewidth]{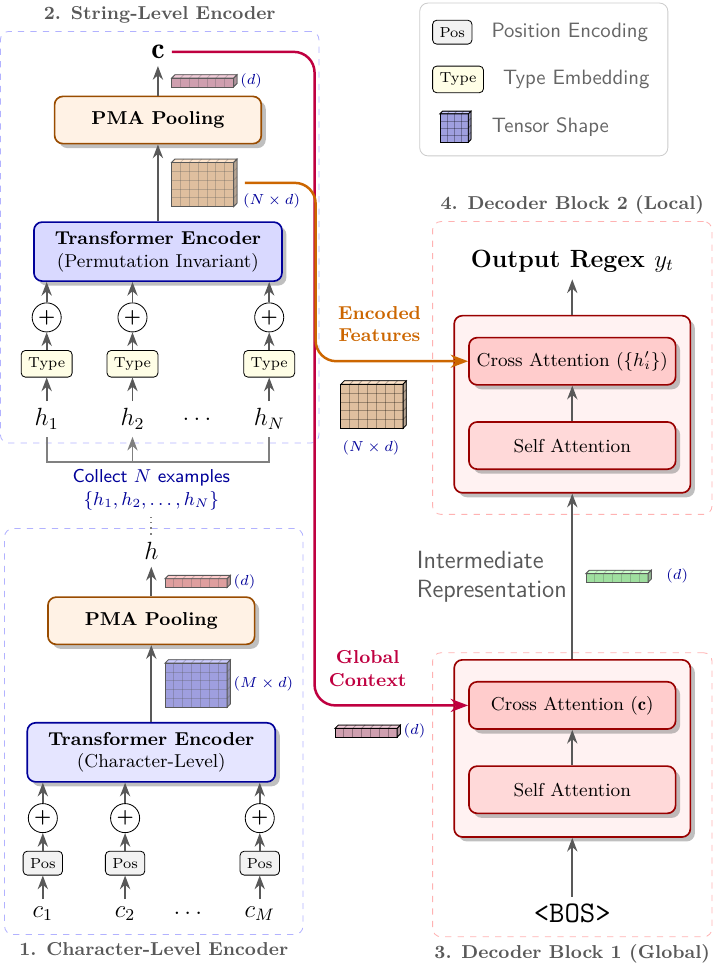}
    \captionof{figure}{The architecture of \setregex. 
    The \hse aggregates character features into string embeddings ($h_i$) and subsequently into a global context ($c$) via PMA to ensure permutation invariance. 
    The decoder employs a dual-attention mechanism, attending first to $c$ for global structure and then to $\{h'_i\}$ for local details.}
    \label{fig:set2regex}
\end{figure}

Building on the canonicalized data and the base synthesizer, we introduce the core of our framework: the recursive decomposition algorithm (see Appendix~\ref{app:algorithm} for the formal procedure and Figure~\ref{fig:flowchart} for the workflow).
Unlike prior works that rely on fixed heuristics, \resyn adaptively decomposes the problem using three specialized learnable modules: \router, \partitioner, and \segmenter.
This modular design allows \resyn to function as a wrapper that empowers any base synthesizer (including \setregex) to handle complex, nested patterns.

We define two complementary decomposition strategies utilized by the framework:
\begin{itemize}
    \item \textbf{Segmentation:}
    This strategy divides each string in the positive example set $P$ into $k$ segments based on shared logical positions (Concatenation).
    Each string $w \in P$ is split into segments $s_1, \ldots, s_k$, transforming the original set $P$ into an ordered sequence of $k$ sets $(P_1, \ldots, P_k)$.

    \item \textbf{Partitioning:}
    This strategy addresses structural diversity by grouping similar strings together (Union).
    The set $P$ is partitioned into $m$ disjoint subsets $\{P_1, \ldots, P_m\}$, where each subset is treated independently.
\end{itemize}

Figure~\ref{fig:running_example} (in Appendix) illustrates this recursive process.
By synthesizing partial regexes at the leaves---using the chosen base synthesizer---and composing them bottom-up, the system constructs the final solution in a principled manner.
Specific architectural details for the learnable decomposition modules are provided in Appendix~\ref{app:modules}.

\paragraph{Theoretical Hardness of Optimal Decomposition of Examples.}
To justify the use of neural architectures for regex synthesis, we establish the computational intractability of finding a concise regex.
We define the \textit{expression cost} $c_{E}(R)$ as the number of symbols (excluding operators) in regex $R$, and the \textit{language expression cost} $c_{E}(S)$ as the minimum $c_{E}(R)$ such that $L(R) \supseteq S$.

We first relate the expression cost to the string alignment problem.
An alignment of a set $S$ into $m$ tuples is a sequence $t_1 \dots t_m$ where each $t_j$ corresponds to a single character $\sigma \in \Sigma$ or $\lambda$.
The minimum alignment cost is denoted as $c(S)$.

\begin{lemma} \label{lem:cost-equiv}
For any finite language $S$, the language expression cost $c_{E}(S)$ is equal to the optimal alignment cost $c(S)$.
\end{lemma}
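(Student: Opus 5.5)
We prove the two inequalities $c_E(S) \le c(S)$ and $c(S) \le c_E(S)$ separately. Throughout we use the following reading of the alignment definition. Each tuple $t_j$ is a column that assigns to every string of $S$ either a symbol or $\lambda$. Each $w \in S$ is recovered by concatenating its row entries and deleting the $\lambda$'s. The cost of column $t_j$ is the number of distinct non-$\lambda$ symbols occurring in it, and $c(S)$ is the minimum over alignments of the sum of column costs.

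We also take the regexes in the definition of $c_E$ to be built from symbols, $\lambda$, concatenation and union, i.e.\ star-free. If Kleene star were allowed, the lemma would fail: $\{a,aa,aaa\}$ has the regex $a^*$ of cost $1$, but every alignment of it costs at least $3$. We therefore fix this convention explicitly in the proof; the same convention is needed for the NP-hardness reduction that follows.

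\textbf{Upper bound $c_E(S) \le c(S)$.} Take an optimal alignment $t_1\dots t_m$. For each column, let $\Gamma_j$ be its set of non-$\lambda$ symbols. Define $R_j = (\sigma_1|\cdots|\sigma_k|\lambda)$ where $\Gamma_j=\{\sigma_1,\dots,\sigma_k\}$, omitting the $\lambda$ alternative when the column contains no $\lambda$. Set $R = R_1R_2\cdots R_m$. Then $c_E(R)=\sum_j|\Gamma_j| = c(S)$, because $\lambda$ and operators are not counted. Moreover, each $w\in S$ lies in $L(R)$: choose from each $R_j$ the entry of $w$'s row in column $j$.

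\textbf{Lower bound $c(S) \le c_E(S)$.} We prove a stronger statement by structural induction on $R$: for every star-free $R$ and every finite $S\subseteq L(R)$, we have $c(S)\le c_E(R)$.
\begin{itemize}
\item Base cases. If $R=\sigma$, then $S\subseteq\{\sigma\}$ and a single all-$\sigma$ column costs $1$. If $R=\lambda$ or $R=\emptyset$, the empty alignment costs $0$.
\item Concatenation, $R=R_1R_2$. Choose for each $w\in S$ a factorization $w=uv$ with $u\in L(R_1)$ and $v\in L(R_2)$. Let $U$ and $V$ be the resulting finite sets of prefixes and suffixes. Take optimal alignments of $U$ and of $V$ given by induction, and juxtapose them column-wise, giving row $w$ the row of $u$ followed by the row of $v$. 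Column costs are unchanged, so $c(S)\le c(U)+c(V)\le c_E(R_1)+c_E(R_2)$.
\item Union, $R=R_1|R_2$. Split $S=S_1\sqcup S_2$ with $S_i\subseteq L(R_i)$, assigning strings lying in both sets to one side arbitrarily. Place the columns of an alignment of $S_1$ first, with all rows of $S_2$ padded by $\lambda$, and then the columns of an alignment of $S_2$, with rows of $S_1$ padded by $\lambda$. Padding with $\lambda$ adds no cost, so $c(S)\le c_E(R_1)+c_E(R_2)$.
\end{itemize}
Applying this to a minimal-cost $R$ with $L(R)\supseteq S$ yields $c(S)\le c_E(S)$.

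The main obstacle is fixing the conventions rather than the induction itself: star-freeness, as noted above, and the treatment of character classes. For classes we would count a class as a single symbol and let columns carry classes, so each $\Gamma_j$ becomes a set of class tokens and both directions go through unchanged. Once these conventions are fixed, the only delicate point is the union case. There the argument depends on $\lambda$-padding being free, which is exactly what makes disjoint column blocks cost-additive.
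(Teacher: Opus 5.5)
Your argument is correct under the conventions you fix, and it takes a different route from the paper. The paper never compares $c_E$ with $c$ directly. It passes through the string decomposition cost $c_d$ (cover all strings by blocks $x_1,\dots,x_m$ used in order) and proves $c_d(S)=c(S)$ and $c_E(S)=c_d(S)$ separately. Its upper bound is essentially yours: the regex $e_1\cdots e_m$ with $e_j=x_j$ or $x_j\texttt{?}$. Its lower bound is an induction on $S$ with two cases. If the decomposition splits into groups with disjoint block ranges, it asserts $c_E(S)=\sum_j c_E(S_j)$; otherwise it argues that every block ``must appear at least once''. Your induction on the syntax of $R$, with the strengthened hypothesis ``$c(S)\le c_E(R)$ for every finite $S\subseteq L(R)$'', is what makes the lower bound airtight. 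It never needs a cheapest regex for $S$ to respect a particular partition of $S$. That is exactly the $\ge$ half of the paper's equation $c_E(S)=\sum_j c_E(S_j)$, which the paper leaves unjustified. The detour through $c_d$ buys the paper a direct link to the segmentation/partitioning picture behind \resyn. However, it also requires the maps $p_i$ to be strictly increasing: with merely non-decreasing maps, $c_d(\{aaa\})=1<3=c(\{aaa\})$. Your direct route avoids that pitfall. Your multi-symbol reading of alignment columns differs from the paper's one-symbol-per-column definition, but the minima agree by splitting columns, and your constructions produce one-symbol columns anyway.

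Two of your convention remarks should be adjusted. First, the paper's $c_E(L)$ only ranges over $R$ with $L(R)$ finite, so $a^*$ is already excluded. The operator in the paper's syntax that actually breaks the lemma is bounded repetition: $a\{1,3\}$ has language $\{a,aa,aaa\}$ and cost $1$, while $c(\{a,aa,aaa\})=3$. Your restriction to symbols, $\lambda$, concatenation and union (with $R\texttt{?}$ read as $R|\lambda$) remains the right one. Second, counting a character class $[C]$ as one symbol against the paper's $c$ falsifies the lemma: $[ab]$ costs $1$ but $c(\{a,b\})=2$. ``Letting columns carry classes'' redefines $c$, so it is no longer the shortest-common-supersequence length that the NP-hardness reduction relies on. To keep the lemma as stated, charge $[C]$ as the union of its $|C|$ characters, which your union case already covers.
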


\begin{lemma} \label{lem:align-nphard}
The optimal alignment problem is \textit{NP-complete}.
\end{lemma}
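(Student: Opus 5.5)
We read the optimal alignment problem in its decision form: given a finite set $S=\{w_1,\dots,w_n\}\subseteq\Sigma^*$ and an integer $K$, decide whether there is an alignment $t_1\cdots t_m$ of $S$ with cost at most $K$. An alignment places every string of $S$ into the $m$ columns, so that column $j$ holds, for each string, either the letter $t_j$ or $\lambda$ (a gap). Each non-$\lambda$ column costs $1$, so the cost is the number of symbol columns. Under this reading, an alignment of $S$ is exactly a common supersequence $t_1\cdots t_m$ (with the $\lambda$'s removed) of all strings in $S$. Its cost is the length of that supersequence. So $c(S)$ equals the length of a shortest common supersequence (SCS) of $S$, and we reduce from the Shortest Common Supersequence problem, which is NP-complete (Maier 1978; Räihä and Ukkonen 1981, already for binary alphabets).

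\textbf{Step 1: membership in NP.} A certificate is the sequence $t_1\cdots t_m$ with $m\le K$ symbol columns, plus, for each $w_i$, an increasing map from its positions to columns carrying equal letters. Checking this is polynomial. Even without the map, a greedy left-to-right subsequence test verifies that each $w_i$ is a subsequence of $t_1\cdots t_m$ in linear time. We may assume $K\le\sum_i|w_i|$, since the concatenation of all strings is always a valid alignment, so the certificate has polynomial size.

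\textbf{Step 2: NP-hardness.} Given an SCS instance $(S,K)$, output the same $(S,K)$ as an alignment instance. For correctness, first take a common supersequence $u$ of length at most $K$. Each $w_i$ embeds into $u$, and filling the unused columns of $w_i$ with $\lambda$ gives an alignment of cost $|u|$. Conversely, let $t_1\cdots t_m$ be an alignment with at most $K$ symbol columns. Delete the all-$\lambda$ columns, which cost nothing. The remaining letters read as a string $u$ of which every $w_i$ is a subsequence. This gives the equivalence. By Lemma~\ref{lem:cost-equiv}, hardness then transfers to computing $c_E(S)$, which is what the subsequent hardness of optimal decomposition needs.

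\textbf{Main obstacle.} The reduction itself is trivial. The work lies in pinning down the paper's informal definition of alignment so that it matches SCS exactly. In particular, we must confirm that a column may carry only one symbol (mixing letters would make the problem collapse), and that the cost counts symbol columns rather than, say, mismatches. We fix these conventions first, consistently with the proof of Lemma~\ref{lem:cost-equiv}. If instead the intended cost is a sum-of-pairs or column-variety measure, we would fall back on the known NP-hardness of multiple sequence alignment (Wang and Jiang 1994) and adapt that reduction.
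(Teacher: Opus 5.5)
Your proposal is correct and follows the paper's route. The paper also reduces Shortest Common Supersequence to the alignment problem via the identity map, turning each $w_i$'s embedding into a common supersequence into single-symbol alignment columns, and it calls membership in NP obvious. Your version is slightly more complete, since you also give the converse direction (reading the column symbols of an alignment off as a common supersequence), which the paper only asserts when it claims the optimal alignment cost equals the shortest common supersequence length exactly.
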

\begin{proof}[Proof (Sketch)]
We reduce the \textit{Shortest Common Supersequence} (SCS) problem to the optimal alignment problem.
Given strings $\{w_1, \dots, w_n\}$, a common supersequence $s$ of length $m$ directly corresponds to an alignment of cost $m$.
Since SCS is NP-complete~\cite{RaihaU1981}, determining if $c(S) \le r$ is also NP-complete.
\end{proof}

Based on Lemma \ref{lem:cost-equiv} and \ref{lem:align-nphard}, we obtain the following result.
\begin{theorem}
\label{thm:concise-regex}
The Concise Regex Problem, which decides whether $c_{E}(S) \le r$ for a given set $S$ and integer $r$, is \textit{NP-complete}.
\end{theorem}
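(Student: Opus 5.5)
The proof splits into NP-hardness and membership in NP; the hardness part is essentially immediate from the two lemmas.

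\emph{Hardness.} By Lemma~\ref{lem:cost-equiv}, every finite $S$ satisfies $c_E(S)=c(S)$. Hence the instance $(S,r)$ of the Concise Regex Problem has answer yes exactly when the optimal alignment instance $(S,r)$ does. The identity map is therefore a polynomial-time reduction from optimal alignment to the Concise Regex Problem. Lemma~\ref{lem:align-nphard} makes the former NP-hard (via SCS), so the Concise Regex Problem is NP-hard as well.

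\emph{Membership in NP.} One natural witness is a regex $R$ with $c_E(R)\le r$ and $L(R)\supseteq S$, checked by running $|S|$ membership tests, each polynomial via Thompson's construction. The obstacle is that $c_E$ counts only non-operator symbols, so a regex with at most $r$ symbols could in principle carry unboundedly many operators. To avoid this, I would not certify a regex at all. Instead I would use Lemma~\ref{lem:cost-equiv} again and take the certificate to be an optimal alignment of $S$: a sequence $t_1\dots t_m$ with $m\le r$, where each $t_j\in\Sigma\cup\{\lambda\}$, together with, for each $w\in S$, a choice of positions realizing $w$. This certificate has size $O(r\cdot|S|)$. It is polynomial because we may assume without loss of generality that $r\le\sum_{w\in S}|w|$; otherwise the answer is trivially yes, since concatenating all of $S$'s strings column-wise gives an alignment of that cost. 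Verification is a direct check, per string, that the selected positions spell $w$. This is essentially the SCS-style certificate from Lemma~\ref{lem:align-nphard}, so membership in NP follows.

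The step to watch is the certificate-size issue: the alignment witness is what keeps membership in NP, whereas a raw regex witness could be long. If Lemma~\ref{lem:align-nphard} is read as already giving NP-completeness of the alignment problem, membership transfers immediately through the equality $c_E(S)=c(S)$.
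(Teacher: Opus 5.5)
Your proposal is correct and follows the paper's route. The paper also proves the theorem by combining NP-completeness of optimal alignment (via the SCS reduction) with the identity $c_{E}(S)=c(S)$, which it establishes in the appendix through the intermediate string-decomposition cost $c_d$; this makes the two decision problems identical. Your explicit alignment certificate for membership in NP, with the bound $r\le\sum_{w\in S}|w|$, is a sound elaboration of the membership the paper obtains directly from that equality and the NP-completeness stated in Lemma~\ref{lem:align-nphard}.
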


Note that the technical details, including the proof of Theorem \ref{thm:concise-regex} are provided in Appendix \ref{appendix:full_complexity}.

The NP-completeness implies that deterministic algorithms for optimal regex decomposition suffer from exponential time complexity as the number or length of examples increases.
In practice, finding the global optimum through combinatorial search is often intractable. 

This motivates the transition from an algorithmic approach to a learning-based approach.
The proposed approach, \resyn, can learn to approximate the optimal alignment and decomposition patterns.
By leveraging the generalizable representations of strings, neural architectures can efficiently navigate the vast search space of potential sub-regexes, providing near-optimal solutions in polynomial time where traditional symbolic methods fail to scale.

\section{Experiments}
\subsection{Experimental Setup}
\paragraph{Data Collection and Preprocessing.}
To train our neural models, we constructed a large-scale dataset of regular expressions by aggregating multiple sources: \polyglot~\cite{DavisMCSL19}, \python~\cite{chapman2016exploring}, and the \sregex~\cite{ye-etal-2020-benchmarking} dataset used in \prax~\cite{VaduguruFP24}.
Among these, \polyglot and \python are real-world regexes collected from open-source repositories, while \sregex is a synthetically generated dataset.
For evaluation, we employed three benchmarks: \regexlib and \snort, which contain real-world regexes, and the test split of \sregex, which is synthetic.

The detailed process of dataset construction, including canonicalization, structural filtering, sub-regex extraction, example generation (positive/negative), and substring expansion, is described in Appendix~\ref{app:data_details}.
We highlight the structural characteristics of our evaluation benchmarks in Figure~\ref{fig:combined_stats}.
Our evaluation suite consists of three distinct datasets: \sregex (334 instances), \snort (352 instances), and \regexlib (1,752 instances).
As illustrated in Figure~\ref{fig:combined_stats}, unlike the other baselines which are dominated by shallow Concatenations, \regexlib exhibits a diverse range of operators and significantly deeper AST structures, posing a greater challenge for synthesis.

\begin{figure}[t]
\centering
\begin{tikzpicture}[font=\sffamily]
    \definecolor{opConcat}{RGB}{0, 114, 178}
    \definecolor{opUnion}{RGB}{213, 94, 0}
    \definecolor{opRep}{RGB}{86, 180, 233}
    \definecolor{opClass}{RGB}{240, 228, 66}
    \definecolor{depthOne}{RGB}{254, 224, 210}
    \definecolor{depthTwo}{RGB}{252, 187, 161}
    \definecolor{depthThree}{RGB}{252, 146, 114}
    \definecolor{depthFour}{RGB}{251, 106, 74}
    \definecolor{depthFive}{RGB}{222, 45, 38}
    \definecolor{depthSix}{RGB}{80, 10, 10}
    \tikzstyle{sector}=[draw=white, line width=0.5pt]
    \tikzstyle{barRect}=[draw=white, line width=0.5pt]
    \matrix [draw=black!20, fill=white, rounded corners, anchor=south, column sep=0.2em] at (2.5, 1.4) {
        \node [shape=circle, fill=opConcat, scale=0.8, label=right:\small Concat] {}; &
        \node [shape=circle, fill=opUnion, scale=0.8, label=right:\small Union] {}; &
        \node [shape=circle, fill=opRep, scale=0.8, label=right:\small Repetition] {}; &
        \node [shape=circle, fill=opClass, scale=0.8, label=right:\small Char. Class] {}; \\
    };
    \begin{scope}[shift={(0,0)}]
        \node[align=center, font=\bfseries, below=1.2cm] at (0,0) {\sregex};
        \fill[opConcat, sector] (0,0) -- (90:1.0) arc (90:-266.76:1.0) -- cycle;
        \fill[opRep, sector] (0,0) -- (-266.76:1.0) arc (-266.76:-268.92:1.0) -- cycle;
        \fill[opUnion, sector] (0,0) -- (-268.92:1.0) arc (-268.92:-270:1.0) -- cycle;
    \end{scope}
    \begin{scope}[shift={(2.5,0)}]
        \node[align=center, font=\bfseries, below=1.2cm] at (0,0) {\snort};
        \fill[opConcat, sector] (0,0) -- (90:1.0) arc (90:-262.84:1.0) -- cycle;
        \fill[opRep, sector] (0,0) -- (-262.84:1.0) arc (-262.84:-270:1.0) -- cycle;
    \end{scope}
    \begin{scope}[shift={(5,0)}]
        \node[align=center, font=\bfseries, below=1.2cm] at (0,0) {\regexlib};
        \fill[opConcat, sector] (0,0) -- (90:1.0) arc (90:-179.39:1.0) -- cycle;
        \fill[opUnion, sector] (0,0) -- (-179.39:1.0) arc (-179.39:-240.41:1.0) -- cycle;
        \fill[opRep, sector] (0,0) -- (-240.41:1.0) arc (-240.41:-263.63:1.0) -- cycle;
        \fill[opClass, sector] (0,0) -- (-263.63:1.0) arc (-263.63:-270:1.0) -- cycle;
    \end{scope}
    \begin{scope}[yshift=-5.0cm]
        \foreach \y/\label in {0/0, 1.5/50, 3/100} {
            \draw[gray!30] (-1.0, \y) -- (6.0, \y);
            \node[left, font=\small, gray] at (-1.0, \y) {\label\%};
        }
        \node[rotate=90, font=\small\bfseries, gray] at (-2.1, 1.5) {Instance Ratio (\%)};
        \def\bw{1.0} 
        \begin{scope}[shift={(0,0)}]
            \fill[depthTwo, barRect]   (-\bw/2, 0) rectangle (\bw/2, 0.02);
            \fill[depthThree, barRect] (-\bw/2, 0.02) rectangle (\bw/2, 2.02);
            \fill[depthFour, barRect]  (-\bw/2, 2.02) rectangle (\bw/2, 2.81);
            \fill[depthFive, barRect]  (-\bw/2, 2.81) rectangle (\bw/2, 3.00);
            \node[below, font=\bfseries] at (0,0) {\sregex};
        \end{scope}
        \begin{scope}[shift={(2.5,0)}]
            \fill[depthTwo, barRect]   (-\bw/2, 0) rectangle (\bw/2, 0.18);
            \fill[depthThree, barRect] (-\bw/2, 0.18) rectangle (\bw/2, 2.62);
            \fill[depthFour, barRect]  (-\bw/2, 2.62) rectangle (\bw/2, 2.76);
            \fill[depthFive, barRect]  (-\bw/2, 2.76) rectangle (\bw/2, 2.98);
            \fill[depthSix, barRect]   (-\bw/2, 2.98) rectangle (\bw/2, 3.00);
            \node[below, font=\bfseries] at (0,0) {\snort};
        \end{scope}
        \begin{scope}[shift={(5,0)}]
            \fill[depthOne, barRect]   (-\bw/2, 0) rectangle (\bw/2, 0.05);
            \fill[depthTwo, barRect]   (-\bw/2, 0.05) rectangle (\bw/2, 0.27);
            \fill[depthThree, barRect] (-\bw/2, 0.27) rectangle (\bw/2, 1.40);
            \fill[depthFour, barRect]  (-\bw/2, 1.40) rectangle (\bw/2, 2.03);
            \fill[depthFive, barRect]  (-\bw/2, 2.03) rectangle (\bw/2, 2.62);
            \fill[depthSix, barRect]   (-\bw/2, 2.62) rectangle (\bw/2, 3.00);
            \node[below, font=\bfseries] at (0,0) {\regexlib};
        \end{scope}
        \matrix [draw=black!20, fill=white, rounded corners, anchor=north, column sep=0.2em] at (2.5, -0.8) {
            \node [shape=circle, fill=depthOne, scale=0.8, label=right:\normalsize D1] {}; &
            \node [shape=circle, fill=depthTwo, scale=0.8, label=right:\normalsize D2] {}; &
            \node [shape=circle, fill=depthThree, scale=0.8, label=right:\normalsize D3] {}; &
            \node [shape=circle, fill=depthFour, scale=0.8, label=right:\normalsize D4] {}; &
            \node [shape=circle, fill=depthFive, scale=0.8, label=right:\normalsize D5] {}; &
            \node [shape=circle, fill=depthSix, scale=0.8, label=right:\normalsize D6+] {}; \\
        };
    \end{scope}
\end{tikzpicture}
\caption{Structural Complexity Comparison across Benchmarks.
The top row displays the distribution of Top-level Operators; while \sregex and \snort are dominated by Concatenations, \regexlib{} exhibits a diverse structural composition with significant use of Union operators. 
The bottom row illustrates the AST Depth distribution using a sequential color gradient (darker indicates deeper nesting). 
Notably, \regexlib contains a significant proportion of high-complexity instances (Depth $\ge$ 5), whereas the other benchmarks are concentrated in shallower regions.}
\label{fig:combined_stats}
\end{figure}

\paragraph{Baselines.}
Since traditional enumerative synthesizers scale poorly to the complex, nested regular expressions targeted in this work, we focus our comparison exclusively on neural synthesis approaches.
We compare our framework against two categories of baselines: neural synthesizer models and decomposition-based approaches.

We evaluate against the following state-of-the-art neural synthesis models:
\begin{itemize}
    \item \prax: A ByT5-based Seq2Seq model~\cite{VaduguruFP24}.
    For a fair comparison, we fine-tuned \texttt{google/byt5-small} on our dataset for 10 epochs using BF16 precision and an effective batch size of 64 via gradient accumulation.
    
    \item \texttt{gpt-oss-120b} \& \texttt{GPT-5}: Large-scale reasoning models evaluated via identical zero-shot prompting (Appendix~\ref{app:prompt_details}).
    Due to its immense capacity, \texttt{GPT-5} is benchmarked separately against our beam search model to highlight our framework's efficiency rather than in the main comparison.
\end{itemize}

We compare our recursive decomposition strategy against approaches that employ problem splitting to validate the effectiveness of the strategy.
For fair comparison, we adopted only their decomposition strategies and combined them with neural synthesizer backend.
\begin{itemize}
    \item \splitregex~\cite{JALC-2025-157}: A neural strategy that segments positive strings. Since \splitregex is functionally identical to our \segmenter, we only include it as a baseline in the ablation study.
    \item \forest~\cite{ferreira2021forest}: A heuristic approach that identifies common substrings as separators to split the synthesis task.
    \item Oracle: Oracle variant of \resyn, performing recursive decomposition guided by the ground truth regex structure.
    It follows the same inference pipeline and termination logic as \resyn but utilizes ground truth labels for routing and decomposition decisions to establish an upper bound on performance.
\end{itemize}

\paragraph{Implementation Details.}
Our framework is highly parameter-efficient, totaling only 29.6M parameters (\setregex: 10M, \router: 4.6M, \partitioner: 7.5M, \segmenter: 7.5M).
We utilized greedy decoding by default for inference.
Detailed hyperparameters, architecture sizes, and training data coverage are provided in Appendix~\ref{app:training_details}.

\subsubsection{Evaluation Metrics}
Evaluating synthesized regular expressions in a PBE setting presents unique challenges.
A trivial solution that simply enumerates all positive examples using the Union operator satisfies the input constraints but fails to generalize and merely memorizes the data without capturing the underlying structure.
To comprehensively assess the quality of the synthesized regexes, we employ three complementary metrics that measure structural conciseness, strict functional equivalence, and approximate functional similarity.

We define the \textbf{Synthesis Success Rate} as the percentage of instances where the synthesized regex $R_{pred}$ correctly classifies all training examples. 
Additionally, to penalize trivial enumerations, we measure the \textbf{Conciseness Ratio}, defined as $|R_{pred}| / |R_{gt}|$, where $R_{gt}$ denotes the ground truth regex. 
Ratios closer to 1.0 indicate successful pattern generalization rather than mere data memorization.
To evaluate strict generalization, we use \textbf{Semantic Accuracy}, defined as the percentage of cases where the synthesized regex correctly classifies \emph{all} examples in the held-out set.
This is the strictest criterion, requiring the synthesized regex to be functionally equivalent to the ground truth within the scope of the test data.
However, because Semantic Accuracy can be overly punitive due to the under-specification problem (e.g., distinguishing \texttt{[a-z]*} from \texttt{[a-z]+} without empty string examples), we also measure the \textbf{Matthews Correlation Coefficient (MCC)} on the held-out set:
\begin{equation}
    \text{MCC} = \frac{\text{TP} \cdot \text{TN} - \text{FP} \cdot \text{FN}}{\sqrt{(\text{TP}+\text{FP})(\text{TP}+\text{FN})(\text{TN}+\text{FP})(\text{TN}+\text{FN})}}
\end{equation}
This provides a balanced correlation metric ranging from -1 to +1.
Finally, regarding \textbf{Failure Handling}, failed syntheses are recorded as failures for the Success Rate. 
For secondary metrics (Conciseness and MCC), failed instances are treated as a trivial union of positive examples, naturally penalizing them with excessive length and poor generalization scores.

\subsection{Experimental Results and Analysis}
We evaluate the proposed methods by addressing the following four research questions (RQs):
\begin{itemize}
    \item \textbf{RQ1:} Can the \resyn framework universally improve the performance of neural synthesizers?
    \item \textbf{RQ2:} Is recursive decomposition more effective than non-recursive (top-level only) splitting for complex real-world regexes?
    \item \textbf{RQ3:} Does the hierarchical architecture of \setregex achieve parameter efficiency while maintaining performance comparable to existing neural baselines?
    \item \textbf{RQ4:} Can our specialized framework compete with large-scale general-purpose reasoning models?
\end{itemize}

\begin{table*}[t]
\centering
\setlength{\tabcolsep}{4.8pt}
\begin{tabular}{lcccccccccccc}
\toprule
& \multicolumn{4}{c}{\textbf{\sregex}} & \multicolumn{4}{c}{\textbf{\snort}} & \multicolumn{4}{c}{\textbf{\regexlib}} \\
\cmidrule(lr){2-5} \cmidrule(lr){6-9} \cmidrule(lr){10-13}
\textbf{Method} & \textbf{Succ} & \textbf{Conc} & \textbf{Acc} & \textbf{MCC} & \textbf{Succ} & \textbf{Conc} & \textbf{Acc} & \textbf{MCC} & \textbf{Succ} & \textbf{Conc} & \textbf{Acc} & \textbf{MCC} \\
\midrule
\prax & 85.03 & 2.16 & 76.05 & 83.47 & 67.05 & 6.21 & 56.53 & 64.33 & 42.24 & 5.27 & 33.68 & 39.71 \\
~~\prax + Forest & 80.84 & 2.35 & 67.96 & 78.25 & 77.84 & 5.01 & \textbf{63.07} & \underline{73.01} & 49.94 & 4.76 & 36.76 & 46.24 \\
\ourrow~~\prax + \resyn & \underline{96.71} & \textbf{1.35} & \underline{84.43} & \underline{94.29} & \textbf{80.68} & \textbf{4.64} & \underline{61.93} & \textbf{74.65} & \underline{67.29} & \underline{4.00} & \underline{40.81} & 58.75 \\
\midrule
\setregex & 90.12 & 1.75 & 82.34 & 88.61 & 55.40 & 7.92 & 44.03 & 51.29 & 38.93 & 5.72 & 30.99 & 36.08 \\
~~\setregex + Forest & 85.03 & 2.02 & 73.95 & 82.72 & 76.70 & 5.31 & \underline{61.93} & 71.41 & 49.03 & 4.99 & 35.62 & 44.92 \\
\ourrow~~\setregex + \resyn & \textbf{97.60} & \underline{1.38} & \textbf{85.93} & \textbf{95.23} & \underline{79.55} & \underline{4.91} & 58.24 & 72.41 & \textbf{68.26} & \textbf{3.88} & \textbf{41.61} & \underline{58.97} \\
\midrule
\texttt{gpt-oss-120b} & 79.94 & 2.70 & 49.10 & 71.81 & 43.18 & 8.33 & 29.26 & 39.36 & \underline{67.29} & 4.03 & 40.75 & \textbf{59.06} \\
\midrule
\oraclerow \prax + Oracle & 99.10 & 1.14 & 87.43 & 96.78 & 86.36 & 3.62 & 65.91 & 79.64 & 80.59 & 3.29 & 41.72 & 65.40 \\
\oraclerow \setregex + Oracle & 99.10 & 1.21 & 88.62 & 96.99 & 86.36 & 3.57 & 63.92 & 78.55 & 82.53 & 2.85 & 43.15 & 66.51 \\
\bottomrule
\end{tabular}
\caption{Comprehensive comparison of synthesis methods across benchmarks.
Succ: Synthesis Success Rate (\%),
Conc: Conciseness Ratio (lower is better),
Acc: Semantic Accuracy (\%),
MCC: Matthews Correlation Coefficient ($\times 100$).
Best results in bold, second best underlined (excluding Oracle).
Highlighted rows indicate our proposed adaptive decomposition (\resyn) and the Oracle upper bound.
}
\label{tab:main_results}
\end{table*}

Table~\ref{tab:main_results} summarizes the quantitative results of our experiments across three benchmarks of increasing structural complexity: \sregex, \snort, and \regexlib.
To rigorously evaluate the impact of decomposition, we employ two distinct base synthesizers: the existing \prax model and our proposed \setregex.
For each base synthesizer, we compare our \resyn framework against the decomposition-based baseline, \forest, to isolate the efficacy of our recursive approach.

Overall, the results demonstrate that applying \resyn consistently improves performance across base models, outperforming the \forest baseline in most metrics.
Notably, the \setregex + \resyn configuration achieves state-of-the-art results, demonstrating performance comparable to or even surpassing the massive general-purpose reasoning model on the most complex benchmark across the majority of evaluation metrics, despite having significantly fewer parameters.
% We now examine each research question in detail.

\subsubsection{RQ1: Effectiveness of ReSyn Framework}
\resyn significantly enhances neural synthesizers across all benchmarks (Table~\ref{tab:main_results}).
For \prax, it boosts the \regexlib success rate from 42.24\% to 67.29\%.
The impact is greater for our \setregex model, where the \setregex + \resyn configuration achieves a 68.26\% success rate (+29.33\% absolute increase) and improves Semantic Accuracy to 41.61\%.
These gains are most pronounced in the complex real-world dataset \regexlib, where \resyn's recursive decomposition effectively mitigates structural complexity.
Furthermore, the lower Conciseness Ratio indicates that \resyn encourages compact, generalized patterns rather than trivial data memorization.
\begin{figure}[t]
\centering
\begin{tikzpicture}[scale=0.9]
\begin{axis}[
    width=\linewidth,
    height=6cm,
    xlabel={Regex AST Depth},
    ylabel={Synthesis Success Rate (\%)},
    xmin=1, xmax=6,
    ymin=0, ymax=100,
    xtick={1,2,3,4,5,6},
    xticklabels={1, 2, 3, 4, 5, 6+},
    ytick={0,20,40,60,80,100},
    legend pos=south west,
    legend style={font=\small, fill=white, fill opacity=0.9, draw opacity=1, text opacity=1},
    grid=major,
    grid style={dashed,gray!30},
    line width=1.2pt,
    mark size=2.5pt,
    every axis plot/.append style={thick}
]
\addplot[color=red, mark=square*] coordinates {(1, 87.1) (2, 73.68) (3, 56.73) (4, 44.4) (5, 31.71) (6, 19.2)};
\addlegendentry{\setregex}
\addplot[color=orange, mark=triangle*] coordinates {(1, 87.1) (2, 78.95) (3, 65.72) (4, 53.07) (5, 44.76) (6, 32.59)};
\addlegendentry{+ \forest}
\addplot[color=yellow!80!black, mark=diamond*] coordinates {(1, 87.1) (2, 72.37) (3, 80.46) (4, 69.77) (5, 63.17) (6, 43.3)};
\addlegendentry{+ \splitregex}
\addplot[color=blue, mark=pentagon*, dashed] coordinates {(1, 87.1) (2, 78.29) (3, 79.95) (4, 73.15) (5, 67.01) (6, 51.34)};
\addlegendentry{+ \resyn}
\addplot[color=green!60!black, mark=*] coordinates {(1, 87.1) (2, 80.26) (3, 89.2) (4, 92.18) (5, 79.8) (6, 63.84)};
\addlegendentry{+ Oracle}
\end{axis}
\end{tikzpicture}
\caption{Performance comparison across regex AST depth.
Non-recursive methods (\forest, \splitregex) show sharp degradation with increasing depth, while Recursive maintains robust performance.
The gap widens beyond depth 4, highlighting the necessity of recursive decomposition for complex real-world regexes.}
\label{fig:recursion_depth}
\end{figure}

\subsubsection{RQ2: Importance of Recursive Decomposition}
We compared \resyn against single-level decomposition baselines (\splitregex, \forest).
As shown in Figure~\ref{fig:recursion_depth}, non-recursive methods exhibit sharp performance degradation as regex AST depth increases.
While shallow splitting suffices for simple patterns (Depth $\le$ 3), it fails to address the combinatorial complexity of nested structures.
In contrast, \resyn maintains robust performance even at Depth 5 and 6+, validating the necessity of a recursive strategy.

\subsubsection{RQ3: Parameter Efficiency of Hierarchical Architecture}
\setregex (10M) is $30\times$ smaller than the \prax baseline (300M) but achieves comparable or superior performance.
This efficiency stems from our \hse, which eliminates the need to learn spurious example orderings, a common overhead in traditional Seq2Seq models.
Notably, \setregex outperforms \prax on the synthetic benchmark (90.12\% vs. 85.03\%) and matches it on \regexlib, proving that architectural alignment with set-theoretic inputs can decouple model capacity from synthesis accuracy.

\subsubsection{RQ4: Comparison with Advanced Language Models}
Compared to \texttt{gpt-oss-120b}, our 29.6M-parameter framework achieves higher Synthesis Success Rates and Semantic Accuracy across all benchmarks despite being $4,000\times$ smaller and avoiding expensive reasoning tokens.
We further evaluated \resyn (with $k=500$ beam search) against \texttt{GPT-5}.
As shown in Table~\ref{tab:gpt5_comparison}, \resyn outperforms \texttt{GPT-5} on \sregex and \snort, and notably achieves higher Semantic Accuracy on \regexlib.
This indicates that our recursive inductive bias captures ground-truth structures more effectively than general-purpose large-scale reasoning, ensuring superior generalization with a fraction of the computational footprint.

\begin{table*}[t]
\centering
\setlength{\tabcolsep}{5pt}
\begin{tabular}{lcccccccccccc}
\toprule
& \multicolumn{4}{c}{\textbf{\sregex}} & \multicolumn{4}{c}{\textbf{\snort}} & \multicolumn{4}{c}{\textbf{\regexlib}} \\
\cmidrule(lr){2-5} \cmidrule(lr){6-9} \cmidrule(lr){10-13}
\textbf{Method} & \textbf{Succ} & \textbf{Conc} & \textbf{Acc} & \textbf{MCC} & \textbf{Succ} & \textbf{Conc} & \textbf{Acc} & \textbf{MCC} & \textbf{Succ} & \textbf{Conc} & \textbf{Acc} & \textbf{MCC} \\
\midrule
\multicolumn{13}{l}{\textit{Large-scale Reasoning Model} (N/A Params)} \\
GPT-5 & 96.71 & 1.43 & 60.78 & 87.59 & 85.23 & 3.70 & 54.83 & 73.55 & \textbf{90.07} & \textbf{2.12} & 48.86 & \textbf{75.89} \\
\midrule
\multicolumn{13}{l}{\textit{Specialized Neural Synthesis} (\setregex + \resyn, 29.6M Params)} \\
Beam Search ($k=500$) & \textbf{100.00} & \textbf{1.02} & \textbf{89.52} & \textbf{97.79} & \textbf{90.62} & \textbf{3.28} & \textbf{63.64} & \textbf{82.10} & 85.33 & 2.58 & \textbf{50.29} & 73.65 \\
\bottomrule
\end{tabular}
\caption{Comparison between \texttt{GPT-5} and \resyn using beam search decoding ($k=500$). 
While \texttt{GPT-5}'s exact parameter count is undisclosed, our highly parameter-efficient \resyn framework significantly narrows the performance gap. Notably, beam search enables \resyn to achieve competitive or even superior performance with a fraction of the computational footprint.
Succ: Synthesis Success Rate (\%),
Conc: Conciseness Ratio (lower is better),
Acc: Semantic Accuracy (\%),
MCC: Matthews Correlation Coefficient ($\times 100$).
Best results in bold.
}
\label{tab:gpt5_comparison}
\end{table*}

\subsection{Ablation Study}
We analyze the contribution of each component by evaluating five configurations based on \setregex:
(1) \resyn: the full framework with dynamic recursive routing;
(2) w/o \router: a fixed heuristic that rigidly alternates between Segmentation and Partitioning;
(3) \segmenter only: single-level concatenation splitting (equivalent to \splitregex);
(4) \partitioner only: single-level union splitting; and
(5) \setregex: the base model without decomposition.

\begin{table}[t]
    \centering
    \setlength{\tabcolsep}{3.5pt}
    \begin{tabular}{lcccc}
        \toprule
        \multirow{2}{*}{\textbf{Method}} & \multicolumn{2}{c}{\textbf{Configuration}} & \multicolumn{2}{c}{\textbf{RegExLib}} \\
        \cmidrule(lr){2-3} \cmidrule(lr){4-5}
         & \textbf{Strategy} & \textbf{Recursion} & \textbf{Succ} & \textbf{MCC} \\
        \midrule
        \ourrow\textbf{\resyn} & Adaptive & Yes & \underline{68.26} & \textbf{58.97} \\
        w/o \router & Fixed & Yes & \textbf{73.63} & 49.34 \\
        \segmenter only & Seg. Only & No & 65.30 & \underline{58.81} \\
        \partitioner only & Part. Only & No & 40.87 & 37.44 \\
        \setregex & - & No & 38.93 & 36.08 \\
        \bottomrule
    \end{tabular}
    \caption{Ablation study on the \regexlib benchmark quantifying the impact of recursive modules.
    Succ: Synthesis Success Rate (\%), MCC: Matthews Correlation Coefficient ($\times 100$). Best results in bold, second best underlined.}
    \label{tab:ablation}
\end{table}

\paragraph{Impact of Decomposition Strategies.}
Table~\ref{tab:ablation} summarizes the results on the \regexlib benchmark.
Comparing the non-recursive baselines, \segmenter only (65.30\%) significantly outperforms \partitioner only (40.87\%), confirming that Concatenation is the dominant top-level operator in regexes.
However, the full \resyn framework achieves a higher MCC (58.97) than \segmenter only (58.81).
This marginal yet crucial gain (0.16) demonstrates that while segmentation handles most structures, the \partitioner is essential for capturing the semantic branching of Union operators, which a Concatenation-only approach fails to represent.

To understand this marginal 0.16 MCC gain despite \resyn's structural superiority, consider the following target regex:
\begin{itemize}[leftmargin=*, nosep]
    \small
    \item \textbf{Target:} \texttt{\textbackslash d\{1,2\}|\textbackslash d\{1,2\}\textbackslash,\textbackslash d\{1,3\}|\textbackslash x7f}
    \item \textbf{\resyn:} \texttt{[2-9]\textbackslash d?|[1-8]\textbackslash d\textbackslash,\textbackslash d\{1,3\}|\textbackslash x7f}
    \item \textbf{\segmenter only:} \texttt{\textbackslash d?\textbackslash d?[\textbackslash,\textbackslash x7f]?(\textbackslash d\{3\})?}
\end{itemize}
While \resyn correctly deduces the structural branches, suffering a slight MCC penalty (81.65) due to minor digit-range overfitting, \segmenter only collapses the Union, over-generalizing to a malformed regex yet paradoxically achieving a higher MCC (90.45).
This occurs because edit-distance-based negatives lack structural edge cases.
Since mitigating this with computationally prohibitive techniques like Rational Speech Act (RSA) is infeasible for massive datasets, the seemingly small MCC difference actually masks a significant leap in structural correctness.

\paragraph{Efficacy of the Learnable Router.}
A critical insight arises from comparing \resyn with the fixed heuristic w/o \router.
While the fixed strategy achieves the highest synthesis success rate (73.63\%), its MCC drops drastically to 49.34 ($-9.63$ compared to \resyn).
This discrepancy reveals that the fixed heuristic suffers from over-decomposition: it aggressively fragments the problem into trivial sub-problems, satisfying the training examples but destroying the underlying generalization structure.
In contrast, the learned \router in \resyn successfully avoids spurious splits, balancing solvability (Success Rate) with structural integrity (MCC).

\subsection{Case Study and Limitations}
We present a qualitative case study to highlight where \resyn's recursive decomposition excels, followed by a discussion on current limitations in evaluation metrics.

\subsubsection*{Union Structure within Concatenation}
\begin{itemize}[leftmargin=*]
    \small
    \item \textbf{Ground Truth:} \texttt{(I\{2,10\}|V\{2,10\})[A-Z]\{3,4\}}
    \item \textbf{\resyn:} \texttt{(I\{2,10\}|V\{2,10\})[A-Z]\{3,4\}}
    \item \textbf{\texttt{gpt-oss-120b}:} \texttt{(V\{2,\}[A-Z]*V?|I\{2,\}[A-Z]*I?)}
\end{itemize}

\noindent \textbf{Analysis:} This case highlights the efficacy of the \partitioner.
First, the framework decomposes the problem into a prefix and a suffix via Concatenation.
While the suffix is uniform (\texttt{[A-Z]\{3,4\}}), the prefix contains distinct subgroups.
\resyn successfully partitions the prefix into `I-based' and `V-based' clusters, synthesizing the correct Union logic.
In contrast, the baseline fails to disentangle the groups, resulting in an overfitted and convoluted regex.

\subsubsection*{Limitation and Future Work}
Current evaluation metrics like MCC can penalize structurally correct models because negative examples lack structural edge cases, allowing malformed regexes to score highly.
Future work will explore semantic hard-negative mining to better evaluate and unlock \resyn's full potential, moving beyond simple edit-distance perturbations.

\section{Conclusions}
In this work, we bridged the gap between existing neural regex synthesizers and the complexity of real-world regular expressions.
Our statistical analysis highlighted that previous benchmarks fail to reflect the intricate, nested structures found in practical applications.
To overcome the limitations of sequential generation models on such data, we proposed \resyn, a novel framework that combines a robust base synthesizer with a recursive decomposition strategy.

Our contributions are threefold.
First, we introduced a rigorous data processing pipeline, including regex canonicalization and structural de-duplication, ensuring high-quality training and fair evaluation.
Second, we designed \setregex, which utilizes a hierarchical encoder to model the set-based nature of PBE inputs, achieving state-of-the-art efficiency with $30\times$ fewer parameters than baselines.
Third, we demonstrated that our recursive framework, driven by a learnable \router, effectively solves the NP-hard decomposition problem.
Empirically, while non-recursive methods struggle with the deep nesting of real-world regexes, our recursive approach adapts the problem size to the model's capability, yielding significant gains in challenging scenarios.

We believe our divide-and-conquer methodology offers a generalizable direction for program synthesis beyond regular expressions, particularly when target programs exhibit recursive structures.

\section*{Acknowledgements}
Sang-Ki Ko and Seongmin Kim were supported by the National Research Foundation of Korea (NRF) grant funded by the Korean government (MSIT)~(No. RS-2024-00456065).
Yo-Sub Han and Su-Hyeon Kim were supported by the NRF grant~(No. RS-2025-00562134) and the AI Graduate School Program~(No. RS-2020-II201361) funded by the Korean government. 

\bibliographystyle{named} % named.bst
\bibliography{main} % main.bib

\clearpage
\appendix
\renewcommand{\thefigure}{A\arabic{figure}}
\setcounter{figure}{0}
\renewcommand{\thetable}{A\arabic{table}}
\setcounter{table}{0}
\renewcommand{\theequation}{A\arabic{equation}}
\setcounter{equation}{0}
\section{Formal Definition of Regular Expressions}
\label{sec:regex_def}
In this section, we formally define regular expressions over an alphabet $\Sigma$:
\begin{itemize}
    \item \textbf{Empty string:} $\lambda$ is a regex with $L(\lambda) = \{\lambda\}$.
    \item \textbf{Base case:} For $a \in \Sigma$, $a$ is a regex with $L(a) = \{a\}$.
    \item \textbf{Concatenation:} $R_1 R_2$ denotes $\{ xy \mid x \in L(R_1), y \in L(R_2) \}$.
    \item \textbf{Union:} $R_1 | R_2$ denotes $L(R_1) \cup L(R_2)$.
    \item \textbf{Kleene star:} $R^*$ denotes $\bigcup_{k=0}^\infty L(R)^k$.
    \item \textbf{Positive closure:} $R^+$ denotes $L(R^*) \setminus \{\lambda\}$.
    \item \textbf{Option:} $R?$ denotes $\{\lambda\} \cup L(R)$.
    \item \textbf{Interval:} $R\{m,n\}$ denotes $\bigcup_{k=m}^n L(R)^k$.
    \item \textbf{Character classes:} $[C]$ where $C \subseteq \Sigma$ denotes a set of characters such that $L([C]) = C$. This includes ranges $[c_1-c_2]$ representing all characters $c \in \Sigma$ between $c_1$ and $c_2$ in lexicographical order.
    \item \textbf{Named classes:} Shorthand aliases for predefined character sets: $L(\texttt{\textbackslash d})$ is the set of digits $\{0, \dots, 9\}$, $L(\texttt{\textbackslash w})$ is the set of alphanumeric characters (including underscores), and $L(\texttt{.})$ is the set of all symbols in $\Sigma$.
\end{itemize}
We exclude non-regular constructs such as backreferences and look-around assertions.

\section{Regex Canonicalization Pipeline}\label{sec:canonicalization}
\subsection{Filtering Invalid Expressions}
We excluded regular expressions that could not be parsed by the Python \texttt{re} module to ensure compatibility.
Furthermore, we filtered out expressions containing features that require backtracking or exceed the scope of regular languages.
Specifically, expressions with \emph{lookaround assertions} (look-ahead/look-behind) and \emph{backreferences} were removed.
We also discarded expressions containing non-printable ASCII characters.

\subsection{AST Optimization}
To structurally manipulate regular expressions, we defined an Abstract Syntax Tree (AST) composed of six distinct node types, as summarized in Table~\ref{tab:ast-structure}.
This intermediate representation allows for modular optimization.
\begin{table}[ht]
\centering
\begin{tabularx}{\linewidth}{llL}
\toprule
\textbf{Category} & \textbf{Node Type} & \textbf{Description} \\
\midrule
\multirow{3}{*}{Atomic} & \textit{Empty} & The empty string ($\lambda$) \\\cmidrule{2-3}
& \textit{Literal} & A sequence of characters \\\cmidrule{2-3}
& \textit{Character Class} & A set of allowed characters \\
\midrule
\multirow{3}{*}{Operator} & \textit{Concat} & Concatenation of a sequence of child nodes \\\cmidrule{2-3}
& \textit{Union} & Logical alternation of child nodes \\\cmidrule{2-3}
& \textit{Repetition} & Quantification with min/max repetition count \\
\bottomrule
\end{tabularx}
\caption{Abstract Syntax Tree (AST) node types used in regex canonicalization.}
\label{tab:ast-structure}
\end{table}

Following the construction of the initial regex AST, we applied a rule-based \textit{Optimizer}.
The optimizer iteratively applies a set of rewrite rules until the AST reaches a fixed point where no further changes occur.
This ensures the expressions are in their most compact and canonical form.
Finally, expressions that resulted in an empty string after optimization were discarded.
The optimization rules are summarized in Table~\ref{tab:optimization-rules}.
\begin{table}[t]
\centering
\small
\setlength{\tabcolsep}{1pt}
\begin{tabularx}{\columnwidth}{@{}cL@{}}
\toprule
\textbf{Transformation} & \textbf{Description / Rationale} \\
\midrule
\multicolumn{2}{l}{\textit{\textbf{Node Simplification \& Conversion}}} \\
\midrule
$[a] \to a$ & Promote single-character classes to literal nodes. \\
$a \mid b \to [ab]$ & Convert union of single-character literals to a character class. \\
$R\{0,0\} \to \lambda$ & Remove repetitions with zero bounds; simplify to empty string. \\
$R\{1,1\} \to R$ & Remove redundant identity repetitions. \\
$a\{n,n\} \to a \dots a$ & Unroll fixed-length repetitions of literals for structural simplicity. \\
\midrule
\multicolumn{2}{l}{\textit{\textbf{Structural Flattening \& Factoring}}} \\
\midrule
$(R_1 R_2)R_3 \to R_1 R_2 R_3$ & Flatten nested concatenations and unions into a single level. \\
$ab \mid ac \to a(b \mid c)$ & Factor out common prefixes to reduce AST depth and redundancy. \\
$R \lambda \to R$ & Remove identity elements ($\lambda$) from concatenations. \\
$``a" ``b" \to ``ab"$ & Merge consecutive literal nodes into a single string. \\
$R_j \mid R_i \to R_i \mid R_j$ & Sort children of Union lexicographically ($i < j$) for a unique form. \\
\midrule
\multicolumn{2}{l}{\textit{\textbf{Normalization \& Canonicalization}}} \\
\midrule
$R\{m,n\} \to R\{m,\min(n,K)\}$ & Cap upper bounds at threshold $K=10$ to prevent state explosion. \\
$[\neg C] \to [\Sigma \setminus C]$ & Standardize negated classes to an explicit positive representation. \\
$\hat{} R \$ \to R$ & Remove start/end anchors in full-match evaluation settings. \\
$R*?, R*+ \to R*$ & Treat all quantifier variants as standard greedy quantifiers. \\
\bottomrule
\end{tabularx}
\caption{Detailed AST optimization rules for regex canonicalization.}
\label{tab:optimization-rules}
\end{table}

\subsection{Literal Anonymization}
Our proposed framework relies on recursive splitting, where the decision to split depends on the boundaries of sub-expressions rather than their specific textual content.
To force the model to focus on these structural boundaries and to reduce the vocabulary size, we anonymized literals with a length of 2 or greater.
These literals were replaced with special \textit{Anonymization Tokens}.
This abstraction ensures that the model learns to treat long literals as atomic blocks, preventing it from overfitting to specific keywords or character sequences found in the training data.
The anonymization tokens were selected from non-printable ASCII ranges to avoid conflict with standard regex operators and printable characters.
The available ranges are $[3, 8]$, $[14, 31]$, and $\{127\}$. We reserved specific indices ($0, 1, 2$) for special system tokens: \texttt{PAD}, \texttt{Empty String} ($\lambda$), and \texttt{Separator/Empty Set} ($\emptyset$), respectively.
To ensure that the neural network learns the \textit{role} of an anonymized literal rather than associating a specific token ID with a specific pattern, we randomize the mapping between literals and tokens for each training instance.
This prevents the model from overfitting to specific token embeddings and encourages learning structural relationships instead.

\subsection{Serialization}
Converting the AST back into a regular expression string requires careful handling of operator precedence and canonicalization. We implemented a serialization method that:
(\textit{i}) adds parentheses strictly when necessary based on operator precedence (e.g., \texttt{(a|b)*} when \textit{Union} is a child of \textit{Repetition} or \textit{Concat}), omitting redundant ones to maintain brevity;
(\textit{ii}) serializes \textit{Character Class} nodes as concisely as possible through a two-step process: first, the character set is matched against predefined named character classes (e.g., \texttt{\textbackslash d} for digits, \texttt{\textbackslash w} for word characters, \texttt{\textbackslash s} for whitespace); if no exact match is found, the algorithm identifies consecutive ASCII characters and groups them into range notation (e.g., \texttt{a-z}), while characters that do not belong to any consecutive sequence are retained as individual literals, resulting in compact representations such as \texttt{[0-9\_a-z]} or \texttt{[0-2a-c]};
(\textit{iii}) simplifies \textit{Repetition} bounds into standard quantifiers when possible: $\{0, \infty\} \to \texttt{*}$, $\{1, \infty\} \to \texttt{+}$, $\{0, 1\} \to \texttt{?}$, and $\{n, n\} \to \{n\}$, while general cases that do not match these patterns remain in the explicit bound notation $\{m, n\}$.
\section{Detailed Algorithm and Running Example of \resyn}
\label{app:algorithm}
The inference process of \resyn, formally detailed in Algorithm~\ref{alg:resyn} and illustrated in Figure~\ref{fig:flowchart}, proceeds recursively starting from the initial set $P$ and $N$.
First, for the trivial case where the set contains only a single positive example ($|P|=1$), we bypass the neural components and directly return the escaped literal string.
This is because inductive generalization requires at least two examples to identify shared patterns; without them, the literal string is the only unambiguous and correct solution.

For non-singleton sets, the \router evaluates the structural complexity.
If the set is simple enough, an off-the-shelf regex synthesizer generates the solution.
To enhance robustness, we incorporate a heuristic fallback mechanism: if the neural synthesizer fails to produce a valid regex consistent with the examples, we iteratively test a predefined list of common patterns (refer to Appendix~\ref{app:fallback_patterns}) and return the valid one describing the smallest language.
Crucially, the viability of this heuristic is intrinsically enabled by our recursive decomposition strategy.
While such simple patterns are insufficient to describe the complex global structure of real-world regexes, they are highly effective at characterizing the sub-components isolated at the leaf nodes of our derivation tree.
This prevents the entire derivation tree from being invalidated due to a failure in a single leaf node.

Otherwise, the set is decomposed by either the \partitioner or \segmenter, and the process repeats for each resulting sub-problem.
To prevent spurious decomposition, we forbid applying the same decomposition strategy consecutively.
Furthermore, to avoid trivial solutions that merely enumerate examples without inductive generalization, we restrict the partitioning mechanism.
Specifically, if a partitioning operation yields only singleton subsets, we reject this decomposition and directly invoke the base synthesizer on the current set.
This mechanism prevents the framework from overfitting to individual strings and forces the model to discover generalized patterns over the collective examples.
This results in a derivation tree where internal nodes represent decomposition operations (Union or Concatenation) and leaf nodes represent sub-regex synthesis.
Finally, the partial regexes generated at the leaves are composed bottom-up according to the operator types of the internal nodes to produce the final complete regular expression (see Figure~\ref{fig:running_example} for a running example).

\begin{figure*}[ht!]
    \centering
    \includegraphics[width=\linewidth]{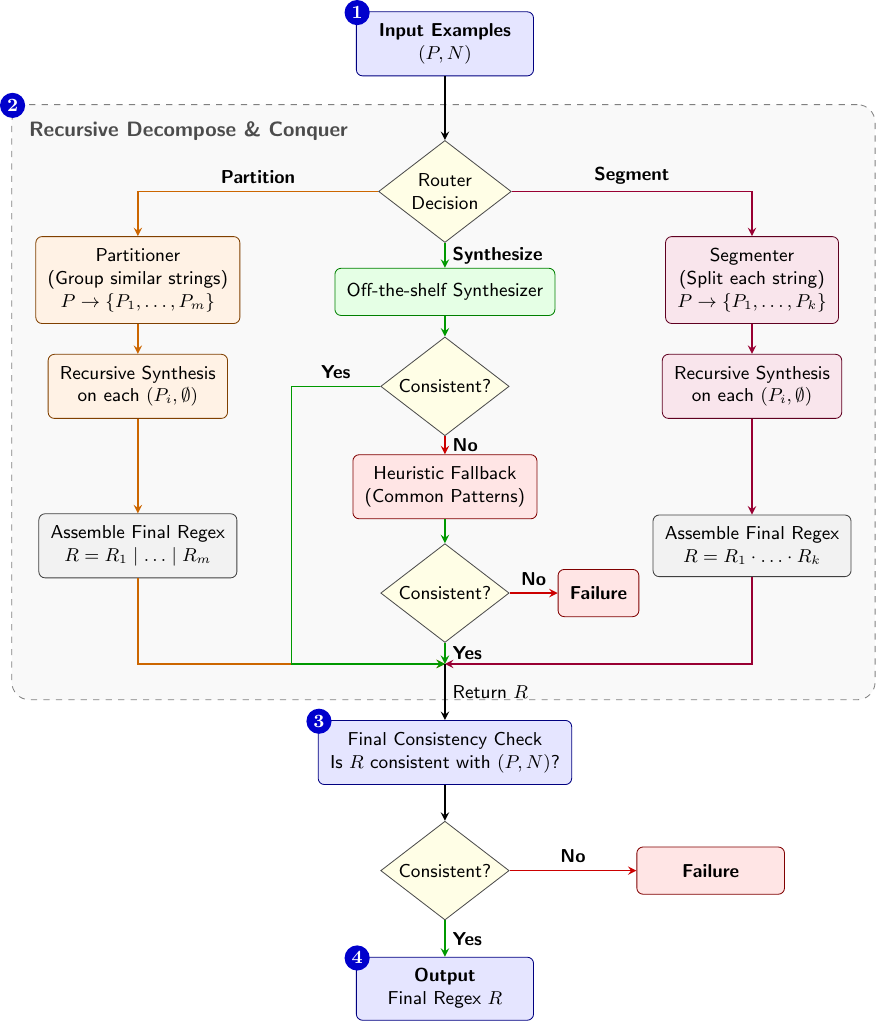}
    \caption{The overall workflow of the \resyn framework. It illustrates the recursive decompose-and-conquer process, adaptively routing the input examples to the Partitioner, Segmenter, or Synthesizer to assemble the final regular expression.}
    \label{fig:flowchart}
\end{figure*}

\begin{figure*}[t]
  \centering
  \includegraphics[width=\linewidth]{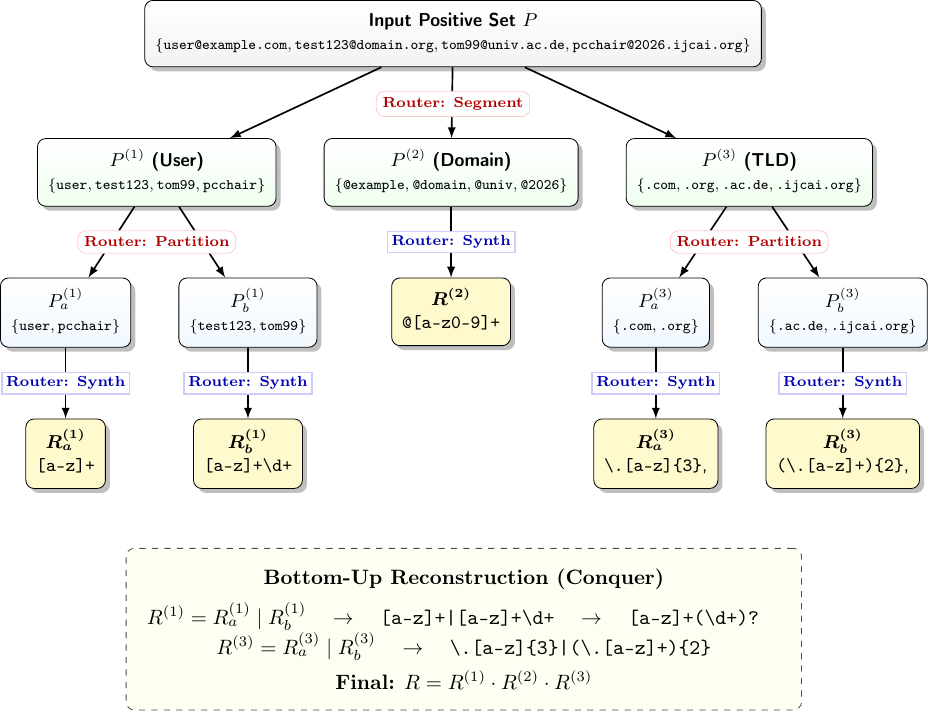}
  \caption{A running example of the \resyn framework process.
  The input positive set $P$ is first decomposed via Segmentation into three logical components ($P^{(1)}$: User, $P^{(2)}$: Domain, $P^{(3)}$: TLD), mirroring a Concatenation structure.
  Subsequently, the \router recursively determines the strategy for each subset.
  For instance, $P^{(1)}$ and $P^{(3)}$ are further decomposed via Partitioning (Union), while non-decomposable components are synthesized directly.
  Finally, the partial regexes are reconstructed bottom-up to form the complete regex.
  Note that this example is simplified for illustrative purposes.}
  \label{fig:running_example}
\end{figure*}

\begin{algorithm}[ht]
\caption{\resyn Recursive Synthesis}
\label{alg:resyn}
\small
\begin{algorithmic}[1]
\Procedure{Synthesize}{$P, N$}
    \State $R \gets$ \Call{RecursiveSynthesize}{$P, N, \textsc{null}$}
    \If{$R = \textsc{Failure}$ \textbf{or} \textbf{not} \Call{Consistent}{$R, P, N$}}
        \State \Return \textsc{Failure}
    \EndIf
    \State \Return $R$
\EndProcedure
\State
\Procedure{SynthesizeFromSingleton}{$P$}
    \State $w \gets$ the single string in $P$
    \If{$w = \lambda$}
        \State \Return a regex matching only the empty string
    \Else
        \State \Return \Call{Escape}{$w$} \Comment{Literal with special chars escaped}
    \EndIf
\EndProcedure
\State
\Procedure{SynthesizeWithFallback}{$P, N$}
    \State $R \gets$ \Call{BaseSynthesizer}{$P, N$}
    \If{$R \neq \textsc{Failure}$ \textbf{and} \Call{Consistent}{$R, P, N$}}
        \State \Return $R$
    \EndIf
    \State $\mathcal{H} \gets \{ \texttt{\textbackslash d+}, \texttt{\textbackslash d*}, \texttt{[a-z]+}, \dots \}$ \Comment{Common patterns}
    \State $R_{best} \gets \textsc{Failure}$
    \For{$h \in \mathcal{H}$}
        \If{\Call{Consistent}{$h, P, N$} \textbf{and} \Call{IsSmaller}{$h, R_{best}$}}
            \State $R_{best} \gets h$
        \EndIf
    \EndFor
    \State \Return $R_{best}$
\EndProcedure
\State
\Procedure{RecursiveSynthesize}{$P, N, \textit{prevType}$}
    \If{$|P| = 1$}
        \State \Return \Call{SynthesizeFromSingleton}{$P$}
    \EndIf
    \State $a \gets$ \Call{Router}{$P$} \Comment{Predict decomposition strategy}
    \If{$a = \textsc{Seg}$ \textbf{and} $\textit{prevType} \neq \textsc{Seg}$}
        \State $(P_1, \ldots, P_k) \gets$ \Call{Segmenter}{$P$}
        \If{$k > 1$}
            \For{$i = 1$ \textbf{to} $k$}
                \State $R_i \gets$ \Call{RecursiveSynthesize}{$P_i, \emptyset, \textsc{Seg}$}
            \EndFor
            \State \Return $R_1 \cdot R_2 \cdots R_k$
        \EndIf
    \EndIf
    \If{$a = \textsc{Part}$ \textbf{and} $\textit{prevType} \neq \textsc{Part}$}
        \State $\{P_1, \ldots, P_m\} \gets$ \Call{Partitioner}{$P$}
        \If{$m > 1$}
            \If{\textbf{all} $|P_i| = 1$ \textbf{for} $i \in \{1, \dots, m\}$} 
                \State \textbf{return} \Call{SynthesizeWithFallback}{P, N}
            \EndIf
            \For{$i = 1$ \textbf{to} $m$}
                \State $R_i \gets$ \Call{RecursiveSynthesize}{$P_i, \emptyset, \textsc{Part}$}
            \EndFor
            \State \Return $R_1 | R_2 | \cdots | R_m$
        \EndIf
    \EndIf
    \State \Return \Call{SynthesizeWithFallback}{$P, N$}
\EndProcedure
\end{algorithmic}
\end{algorithm}

\section{List of Common Patterns for Fallback Strategy}
\label{app:fallback_patterns}
To synthesize atomic components that the neural model fails to resolve, we employ a deterministic fallback mechanism.
This mechanism iterates through a predefined list of base character classes $\mathcal{B}$, combining each with quantifiers $Q \in \{+, *\}$.
The regex $R = b \cdot q$ (where $b \in \mathcal{B}, q \in Q$) is validated against the example sets, and the first consistent pattern is returned.

Crucially, the search order is designed to favor specificity over generality. As shown in Table~\ref{tab:fallback_patterns}, the algorithm prioritizes narrow character sets (e.g., digits, specific alphabets) before checking broader classes (e.g., word characters) or the wildcard (dot). This ensures that the synthesized regex captures the smallest language necessary to describe the examples, avoiding over-generalization.

\begin{table}[!ht]
    \centering
    \begin{tabular}{cll}
        \toprule
        \textbf{Priority} & \textbf{Base Pattern} & \textbf{Description} \\
        \midrule
        1 & \texttt{\textbackslash d} & Digits (\texttt{[0-9]}) \\
        2 & \texttt{[a-z]} & Lowercase alphabets \\
        3 & \texttt{[A-Z]} & Uppercase alphabets \\
        4 & \texttt{[a-zA-Z]} & Mixed case alphabets \\
        5 & \texttt{[0-9a-fA-F]} & Hexadecimal characters \\
        6 & \texttt{\textbackslash w} & Word characters \\
        7 & \texttt{\textbackslash s} & Whitespace characters \\
        \midrule
        8 & \texttt{\textbackslash D} & Non-digits \\
        9 & \texttt{\textbackslash W} & Non-word characters \\
        10 & \texttt{\textbackslash S} & Non-whitespace characters \\
        11 & \texttt{.} & Any character (Wildcard) \\
        \bottomrule
    \end{tabular}
    \caption{Prioritized list of base patterns used in the fallback mechanism. The algorithm searches strictly in this order, combining each pattern with quantifiers (`+' then `*').}
    \label{tab:fallback_patterns}
\end{table}
\section{Details of Neural Decomposition Modules}
\label{app:modules}

In this section, we provide the detailed architectures and training mechanisms for the three learnable modules constituting the \resyn framework.

\subsection{\router}
The \router acts as a central policy network that determines the optimal decomposition strategy for a given set of positive examples $P$.
While it shares the Hierarchical Encoder architecture with \setregex to capture structural features, it incorporates two key modifications tailored for its control logic.
First, the \router takes only the positive examples $P$ as input, as the decomposition strategy relies solely on the structural patterns of the strings to be matched, thereby eliminating the need for type embeddings to distinguish negative examples.
Second, it utilizes only the final set-level context vector $c$ to compute the policy.

Formally, the \router maps the context vector $c$ to a probability distribution over three discrete actions: \texttt{Synthesize}, \texttt{Partition}, and \texttt{Segment}.
\begin{equation}
    \pi(a | P) = \text{Softmax}(\text{MLP}(c))
\end{equation}
where $a \in \{\texttt{Synthesize}, \texttt{Partition}, \texttt{Segment}\}$.
During inference, if \texttt{Synthesize} is selected, an off-the-shelf regex synthesizer is invoked to generate the base case regex.
Otherwise, the corresponding decomposition module (\partitioner or \segmenter) is triggered to solve the problem recursively.

To train the \router, we derive ground-truth labels directly from the target regex's AST.
We inspect the top-level operator of the target regex and assign the labels as follows:
\texttt{Partition} for a Union operator, \texttt{Segment} for a Concatenation operator, and \texttt{Synthesize} for any other operators (non-Union/Concat).

\subsection{\partitioner}
The \partitioner is responsible for dividing the example set $P$ into $m$ disjoint subsets $\{P_1, \ldots, P_m\}$.
It aims to cluster strings that share similar structural patterns identifiable by a common sub-regex.
Like the \router, the \partitioner takes only positive examples as input, focusing solely on the structural similarities among the strings.
The architecture adapts the Hierarchical Encoder of \setregex with two specific modifications tailored for the clustering task:
\begin{itemize}
    \item \textbf{Ordered String Encoding:} Unlike the permutation-invariant set encoder, the \partitioner respects the input order. We add positional encodings to the string-level embeddings $\mathbf{h}_i$ to differentiate distinct examples. No set-level context vector is generated, as the focus is on individual string assignment.
    \item \textbf{Pointer-based Decoding:} The decoder employs a mechanism inspired by Pointer Networks~\cite{vinyals2015pointer} to predict a sequence of cluster assignments corresponding to the input strings.
\end{itemize}

The decoder operates auto-regressively to determine the cluster indices.
It begins with a fixed start token (representing cluster 0) and, at each step, uses the predicted cluster label from the previous step as input to generate the assignment for the current string.
To supervise this process, we derive labels from regexes where the top-level operator is a Union ($R = R_1 | \dots | R_k$).
For each string $w \in P$, we identify the first sub-regex $R_i$ that matches $w$ and group the strings accordingly.
Crucially, to align with the decoding mechanism, these absolute group mappings are converted into relative labels:
the first string is always assigned to cluster 0, and each subsequent string is assigned either an existing cluster index (if it matches the same sub-regex as a previous string) or a new index to initiate a new cluster.
This approach allows the \partitioner to dynamically determine the optimal number of subsets $m$ based on data complexity, rather than being restricted to a fixed number of splits.

\subsection{\segmenter}
The \segmenter determines the optimal split points within the strings to decompose a concatenation problem.
Unlike the previous modules that utilize hierarchical encoding, the \segmenter adopts a standard Transformer encoder-decoder architecture~\cite{vaswani2017attention}.
This design choice is motivated by the need for fine-grained, character-level alignment across all examples simultaneously.
Critically, prior approaches like \splitregex~\cite{JALC-2025-157} process strings independently, which often leads to structural inconsistencies; our model addresses this by processing all examples jointly.

The input is formed by concatenating all characters from the strings in $P$, separated by distinct delimiter tokens.
The model operates as an auto-regressive sequence labeler on this flattened sequence, predicting a segment index for each character.
To provide supervision for this task, we utilize regexes where the top-level operator is a Concatenation ($R = R_1 \cdot R_2 \dots R_k$).
For every string $w \in P$, we decompose it into substrings $w = w_1 \dots w_k$ (where each $w_i \in L(R_i)$) and assign each character the index $i$ of its corresponding segment.
By training on these labels, the decoder learns to attend to both the global context and previously predicted indices, ensuring that the segmentation decisions remain consistent across all examples.
Finally, the strings are split according to these predicted indices, forming a sequence of new sub-problems $(P_1, P_2, \ldots, P_k)$ to be solved recursively.
\section{NP-hardness of Optimal Example Decomposition and Concise Regex Synthesis Problem}
\label{appendix:full_complexity}
We now establish that the problem of optimally decomposing a set of examples using the concatenation or union operations is \textit{NP-hard},
and thus this problem is intractable with deterministic algorithms, unless $P = NP$.

First, we need a quantitative criterion for the conciseness of regexes among several candidates
to identify an optimal one.
Here we define a cost for regexes by the number of symbols in the expression,
and a concise regex will have the lowest such cost.
\begin{definition}[Expression cost]
Let $R$ be a regex over $\Sigma$.
The expression cost~$c_{E}(R)$ of $R$
is the minimum number of symbols, except operators, in $R$.
\end{definition}
Although there are various complexity criteria for regexes and/or regular languages,
such as the star height~\cite{Eggan1963} of the regexes or the minimum number of DFA states for the languages,
we use the number of symbols for the cost
since the length of a regex
is bounded above by a constant factor (not exceeding 4)
to the number of symbols
presented in the expression.
This is easily proved
using the fact that the syntax tree of regexes
is a binary tree.

\begin{example}
Here are some regexes and their expression costs.
\begin{itemize}
\item $c_{E}(\texttt{a}) = 1$
\item $c_{E}(\texttt{a|ab|ac}) = 5$
\item $c_{E}(\texttt{a(b|c)?}) = 3$
\end{itemize}
%The third language is often described in $a(b|c)?$ with practical regex notation,
%justifying the choice of the expression cost.
Note that the second and the third expressions describe the same language but have different cost values.
\end{example}

We now define a similar cost for languages.
Here we only consider finite languages
since
(1) our goal is to infer a regex from finite examples, and
(2) not every languages are regular; but finite ones are.

\begin{definition}[Language expression cost]
Let $L$ be a finite language over $\Sigma$.
The language expression cost~$c_{E}(L)$ of $L$
is the minimum expression cost~$c_{E}(R)$,
where the regex~$R$ is for some finite language~$M=L(R)$
that contains $L$.
%and does not contain strings outside $L$ whose supersequence is a proper substring of strings in $L$.
\end{definition}
% We use the term `expression cost'
% to refer to either the expression cost or the language expression cost
% if it is clear in its context.
With these cost functions, we define the concise regex problem
that computes the lowest language expression cost for a given sample set.

For instance, consider a language~$L_1=\{\texttt{apple}, \texttt{apply}\}$.
there are two regex with minimum cost: \texttt{appl(e|y)} and \texttt{apple?y?}.
Note that the second expression
describes a superset of $L_1$ and
still fits in the condition.

\begin{problem}[Concise regex problem]
Let~$S$ be a finite set of strings and $r$ a nonnegative integer.
Decide whether $c_{E}(S) \le r$ or not.
\end{problem}

In addition to that, we need a cost for such decomposition of multiple strings
to identify an optimal split,
\begin{definition}[String decomposition cost] % This is the block subdivision
Let $n$ strings~$w_1, w_2, \ldots, w_n$ be given.
we can represent these strings with
$m$ strings~$x_1, x_2, \ldots, x_m$ and 
$n$ non-decreasing functions~$p_i: [1, l_i] \to [1, m]$ over integers for for some $l_i$ and $1 \le i \le n$.
These must satisfy $w_i = x_{p_i(1)} x_{p_i(2)} \cdots x_{p_i(l_i)}$ for all $1 \le i \le n$.

Then, the string decomposition cost~$c_d(w_1, w_2, \ldots, w_n)$
is the minimum of total length of the strings~$x_1, x_2, \ldots, x_m$.
\end{definition}
% Note that, if a string~$w_i$ has repeated substring,
% $\{x_k\}_k$ could contain duplicated strings
% since $p_i$'s are monotonically increasing
This cost definition describes
the equivalent procedure
to splitting the positive examples
into substrings and identifying
the matching groups.

\begin{example}
For three strings~($w_i$'s) \texttt{bat}, \texttt{cat}, and \texttt{dog},
the string decomposition cost is 7 since
$\{x_k\}_k = \{\texttt{b}, \texttt{c}, \texttt{at}, \texttt{dog}\}$ and
all $w_i$'s have a mapping~$p_i$ to rebuild themselves.
Note that the language expression cost for the strings are also 7 (The regex is \texttt{(b|c)at+dog}).
\end{example}

%\begin{definition}
%\begin{gather*}
%c_d(w_1, w_2, \ldots, w_n) = \\
%\begin{cases}
%  0 & n = 0 \\
%  0 & \{ w_1, \ldots, w_n\} = \{\lambda\} \\
%  |x| & \{ w_1, \ldots, w_n\} \setminus \{\lambda\} = \{x\} \\
%  \min_{w_i = u_iv_i}{c(u_j\ldots) + c(v_j\ldots)} & \text{otherwise}
%\end{cases}.
%\end{gather*}
%\end{definition}

The optimal alignment in this paper
is a symbol-wise decomposition of strings with the minimum alignment cost.
\begin{definition}[Alignment cost] % This is the width-1-restricted block subdivision
An alignment tuple is $(u_1, u_2, \ldots, u_n) \in \Omega = (\Sigma \cup \{\lambda\})^n \setminus \{\lambda\}^n$.
An alignment of $n$ strings~$w_1, w_2, \ldots, w_n$ is
a length-$m$ sequence~$t_1 t_2 \cdots t_m$  of alignment tuples
satisfying
$\{t_{j1}, t_{j2}, \ldots, t_{jn}\} \setminus \{\lambda\} = \{\sigma\}$ where $t_{ij}$ is the $j$th string of the tuple~$t_i$ and $\sigma \in \Sigma$,
and catenation of all $t_{ji}$ for $1 \le j \le m$ yields $w_i$.
The cost of an alignment is its length, $m$.

For a finite language~$S$,
$c(S)$ denotes the minimum alignment cost of strings in $S$.
%\begin{gather*}
%c(w_1, w_2, \ldots, w_n) = \\
%\begin{cases}
%  0 & n = 0 \\   % auto by def
%  0 & \{ w_1, \ldots, w_n\} = \{\lambda\} \\   % impossible by def
%  1 & \{ w_1, \ldots, w_n\} \setminus \{\lambda\} = \{\sigma\}, \sigma \in \Sigma \\  % defined
%  \min_{}{c(u_j\ldots) + c(v_j\ldots)} & \text{otherwise}  % defined
%\end{cases}.
%\end{gather*}
\end{definition}

A notable point is that
the optimal string decomposition cost and 
the optimal alignment cost are always equal
for any finite language.
\begin{lemma}\label{lem:cost-align-decompose}
For any finite language~$S$, $c_d(S) = c(S)$.
\end{lemma}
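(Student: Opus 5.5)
The plan is to prove the two inequalities $c_d(S) \le c(S)$ and $c(S) \le c_d(S)$ separately. In each direction, an optimal object of one kind is converted into an object of the other kind whose cost is no larger. Throughout, write $S=\{w_1,\dots,w_n\}$.

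\emph{Step 1: $c_d(S)\le c(S)$.} Take an optimal alignment $t_1\cdots t_m$ with $m=c(S)$. By definition, every column $t_j$ carries exactly one non-$\lambda$ symbol $\sigma_j$, possibly repeated across rows. Set $x_j=\sigma_j$ for $1\le j\le m$, so that $\sum_j |x_j| = m$. For each $i$, list the indices $j_1<j_2<\cdots<j_{l_i}$ with $t_{j i}\neq\lambda$, and define $p_i(r)=j_r$. Since the concatenation of the $i$th entries of the columns yields $w_i$, we get $w_i=x_{p_i(1)}\cdots x_{p_i(l_i)}$. Moreover $p_i$ is strictly increasing, hence non-decreasing. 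This is a valid decomposition of cost $m$.

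\emph{Step 2: $c(S)\le c_d(S)$.} Take an optimal decomposition $x_1,\dots,x_m$ with maps $p_i$. Replace each block $x_k$ by $|x_k|$ consecutive columns, indexed by pairs $(k,r)$ with $1\le r\le |x_k|$ and ordered lexicographically. In column $(k,r)$, put $x_k[r]$ in row $i$ if $k$ lies in the image of $p_i$, and $\lambda$ otherwise. Every non-$\lambda$ entry of a column equals the same symbol $x_k[r]$, so the single-symbol condition holds. Any column that is entirely $\lambda$ (this happens exactly when $x_k$ is used by no $w_i$) is deleted, which only lowers the cost. Reading row $i$ left to right then gives $x_{p_i(1)}x_{p_i(2)}\cdots = w_i$, because the blocks appear in increasing order of $k$. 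The resulting alignment has length at most $\sum_k|x_k| = c_d(S)$.

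\emph{Main obstacle.} The delicate point is Step 2 in the case where some $p_i$ takes the same value twice, i.e.\ $x_k$ is reused consecutively inside one string. A block column can place only one copy of $x_k$ in row $i$. Worse, the literal "non-decreasing" reading would make the lemma false: for $S=\{\texttt{aaaa}\}$, taking $x_1=\texttt{a}$ with a constant $p_1$ gives $c_d=1$, while $c(S)=4$. So I would fix the convention, consistent with the \texttt{bat}/\texttt{cat}/\texttt{dog} example, that each $p_i$ is strictly increasing, i.e.\ each block is used at most once per string. Under this reading the construction above goes through verbatim, and the lemma follows from combining Steps 1 and 2.
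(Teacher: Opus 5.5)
Your proof is correct and follows essentially the same argument as the paper: one single-symbol block per alignment column gives $c_d(S)\le c(S)$, and expanding each block $x_k$ into $|x_k|$ columns, filled exactly in the rows $i$ whose $p_i$ hits $k$, gives $c(S)\le c_d(S)$. Your caveat is well founded and makes explicit an assumption the paper uses silently. The paper's second direction places only one copy of $x_k$ in each row, so it needs each $p_i$ to be strictly increasing, and your $\{\texttt{aaaa}\}$ example correctly shows that the literal ``non-decreasing'' definition makes the lemma false.
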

\begin{proof}
$c_d(S)  \le c(S)$:
Suppose $c(S) = C$.
Since every alignment tuples in the optimal alignment has unique symbol,
we can use those symbols for the decomposed substrings.
Then, it is obvious that $S$ has string decomposition cost of at most $C$.

$c(S) \le c_d(S)$:
Suppose $c_d(S) = C$ with $m$ decomposed substrings~$x_1, \ldots, x_m$.
By definition, $\sum_{k}{|x_k|} = C$.
Then, for each $1 \le k \le m$,
we can create $|x_k|$ alignment tuples of $(u_1, u_2, \ldots, u_n)$
such that, for $j$th tuple, $u_i = x_{kj}$ for $1 \le i \le n$ if the range of $p_i$ contains $k$,
and $\lambda$ otherwise.
With these alignment tuples concatenated in order, we have an alignment for $S$
such that the aligned symbols have a common matching substrings in string decomposition,
%Then, the sequence of those alignment tuples
%is an alignment of the strings in $S$,
showing that $S$ has an alignment of cost~$C$.
\end{proof}
Thus, instead of showing the hardness of
the optimal string decomposition,
we show that the optimal alignment problem is hard.

\begin{problem}[Optimal alignment problem]
Let~$S$ be a finite set of strings and $r$ a nonnegative integer.
Decide whether $c(S) \le r$ or not.
\end{problem}

\begin{definition}[Supersequence]
For two strings~$x$ and $y$ over $\Sigma$,
$y$ is a supersequence of $x$
if $x$ matches against
the regex~$\Sigma^* u_1 \Sigma^* u_2 \Sigma^* \cdots \Sigma^* u_k \Sigma^*$
where $y = u_1 u_2 \cdots u_k$.
$x \preceq y$ denotes that $y$ is a supersequence of $x$.
\end{definition}

\begin{lemma}\label{lem:align-nphard-app}
The optimal alignment problem is {\sf NP}-complete.
\end{lemma}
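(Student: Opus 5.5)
The plan is to show membership in {\sf NP} and then {\sf NP}-hardness by a polynomial reduction from the Shortest Common Supersequence (SCS) problem, which is {\sf NP}-complete~\cite{RaihaU1981}. The crux is a structural correspondence: an alignment $t_1 t_2 \cdots t_m$ of $S=\{w_1,\dots,w_n\}$ is, up to reading off the unique non-$\lambda$ symbol of each tuple, the same thing as a common supersequence of length $m$ together with an embedding of each $w_i$ into it.

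\textbf{Membership in {\sf NP}.} For an instance $(S,r)$, first observe that $c(S) \le \sum_i |w_i|$: we can simply list every symbol of every string in its own tuple. Hence if $r \ge \sum_i |w_i|$ the answer is trivially yes. Otherwise a certificate is an alignment of length $m \le r < \sum_i|w_i|$, which has polynomial size $O(mn)$. Checking it is direct. For each tuple we verify that its non-$\lambda$ entries form exactly one symbol $\sigma$. For each $i$ we verify that concatenating the $i$th components gives $w_i$.

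\textbf{Reduction.} Given an SCS instance, namely strings $w_1,\dots,w_n$ and a bound $k$, we output $S=\{w_1,\dots,w_n\}$ and $r=k$. Duplicate strings can be removed without changing either problem. We then prove $c(S)\le k$ iff the strings have a common supersequence of length at most $k$, in two directions.

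\emph{Alignment to supersequence.} Take an alignment $t_1\cdots t_m$ and let $\sigma_j$ be the unique symbol of $t_j$. Set $y=\sigma_1\cdots\sigma_m$. Each $w_i$ is the concatenation of the $i$th components, and each component is either $\sigma_j$ or $\lambda$. So $w_i$ is obtained from $y$ by deleting positions, i.e.\ $w_i\preceq y$, and $|y|=m$.

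\emph{Supersequence to alignment.} Take a common supersequence $y=\sigma_1\cdots\sigma_m$ and fix, for each $i$, an embedding of $w_i$ into $y$, i.e.\ an increasing choice of positions. Define $t_j$ by putting $\sigma_j$ in component $i$ when position $j$ is used by $w_i$, and $\lambda$ otherwise. Tuples that come out as all-$\lambda$ are not allowed by the definition of $\Omega$, so we delete them. This gives a valid alignment of length at most $m$.

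Together the two directions give $c(S)=$ length of a shortest common supersequence. The reduction is clearly polynomial.

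The main obstacle is care with the definitions, not any deep step. Two points need attention. First, the paper's definition of $\preceq$ as stated reads backwards, matching $x$ against $\Sigma^* u_1\Sigma^*\cdots$. We will use the intended meaning, that $x$ is a subsequence of $y$, and state it explicitly. Second, the exclusion of all-$\lambda$ tuples means the correspondence is ``at most $m$'' rather than exactly $m$. Because both problems ask about an upper bound $\le r$, this inequality suffices.
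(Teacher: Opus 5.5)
Your proposal is correct and takes essentially the same route as the paper: the identity reduction from Shortest Common Supersequence, based on the correspondence between alignments of length $m$ and common supersequences of length $m$. Your version is more complete than the paper's. The paper only builds an alignment from a supersequence, which shows $c(S)$ is at most the SCS length even though it claims equality; you also prove the converse direction, handle the deletion of all-$\lambda$ tuples, and actually argue membership in {\sf NP}.
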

\begin{proof}
It is obvious that the problem is in {\sf NP}.

For {\sf NP}-hardness,
we reduce the shortest common supersequence problem~\cite{RaihaU1981},
which decides whether the shortest common supersequence for given set of strings is at most $m$ or not.
Suppose we have $n$ strings~$w_1, \ldots, w_n$.
We denote $w_i = w_{i1} w_{i2} \cdots w_{il_i}$, where $w_{ij} \in \Sigma$ for all $1 \le j \le l_i$.

We will first show that, if the shortest common supersequence of the strings has length $m$,
the optimal alignment cost for $n$ strings~$w_1, \ldots, w_n$ is exactly $m$.
Let $s=s_1 s_2 \cdots s_m$ be the supersequence of $n$ strings.
Then, for each string~$w_i$, there is a mapping~$f_i: [1, l_i] \to [1, m]$  % note that $l_i \le m$.
such that $w_{ij} = s_{f_i(j)}$ for all $1 \le i \le n+1$.
Then, there is an inverse mapping~$f_i^{-1}: [1, m] \to [1, l_i] \cup \{\bot\}$ of $f_i$,
where $f_i^{-1}(k) = \bot$ if there is no $j$ such that $f_i(j) = k$.

For example, Consider an example in Figure~\ref{fig:align-nphard-example} showing an alignment of two strings $w_1=\texttt{http}$ and $w_2=\texttt{ftps}$ against their common supersequence~$s=\texttt{hfttps}$.
It is obvious to see that the values of $l_1$ and $l_2$ are 4, since both strings are of length 4.
The alignment matches the four symbols of $w_1$ against the 1st, 3rd, 4th, 5th symbols of the common supersequence~$s$,
thus $f_1(1) = 1$, $f_1(2) = 3$, and so on.
The inverse mapping is obvious, with the missing values $f_1^{-1}(2) = f_1^{-1}(6) = \bot$.
$f_2$, the alignment mapping for $w_2$, is also similarly defined.

Then, it is obvious that the corresponding alignment tuple has symbols of $\{ w_{i, f_i^{-1}(k)} \mid i \in [1, n+1] \} = \{\sigma, \lambda \}$,
where $w_{i\bot} = \lambda$ to denote that the alignment has no matching symbol in $w_i$.
Note that the cost of such alignment tuple is 1.
By summing up all these costs, we get exactly $m$ as an upper bound of their alignment cost.

\begin{figure}
\centering
\begin{tabular}{ccccccccl}
\multicolumn{2}{l}{Index}& 1 & 2 & 3 & 4 & 5 & 6 \\\hline
$s$   & $=$ & \texttt{h} & \texttt{f} & \texttt{t} & \texttt{t} & \texttt{p} & \texttt{s} \\\hline
$w_1$ & $=$ & \texttt{h} &            & \texttt{t} & \texttt{t} & \texttt{p} &            \\
$w_2$ & $=$ &            & \texttt{f} & \texttt{t} &            & \texttt{p} & \texttt{s} \\\hline
\end{tabular}
\caption{Example alignment of two strings `\texttt{http}' and `\texttt{ftps}'
\label{fig:align-nphard-example}}
% note that the cost-efficient regex for this alignment is "(h|f)tt?ps?", having several additional words including https, htps, ftp, fttp, and so on.
% the exact protocol string would be (ht|f)tps?; having the same symbol cost of 6.
\end{figure}

Thus we can solve the shortest common supersequence problem using the optimal alignment problem.
Since the former is {\sf NP}-complete, the optimal alignment problem is {\sf NP}-hard.
\end{proof}

\begin{corollary}
Let~$S$ be a finite set of strings and $r$ a nonnegative integer.
It is {\sf NP}-complete
to decide whether~$c_d(S) \le r$ or not.
\end{corollary}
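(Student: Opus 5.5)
The corollary follows by combining Lemma~\ref{lem:cost-align-decompose} with Lemma~\ref{lem:align-nphard-app}. Lemma~\ref{lem:cost-align-decompose} gives $c_d(S)=c(S)$ for every finite $S$. So the instance $(S,r)$ of the decomposition decision problem has the same answer as the instance $(S,r)$ of the optimal alignment problem. The identity map is therefore a polynomial-time reduction in both directions, and what remains is to check membership in {\sf NP} and to transfer hardness.

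\textbf{Membership in {\sf NP}.} The certificate is a list of strings $x_1,\ldots,x_m$ of total length at most $r$, together with the maps $p_i$, written as the index sequences $p_i(1),\ldots,p_i(l_i)$. A verifier checks three things: each sequence is non-decreasing, each $w_i=x_{p_i(1)}\cdots x_{p_i(l_i)}$, and $\sum_k|x_k|\le r$.

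The certificate needs polynomial size. Without loss of generality every used $x_k$ is nonempty, since empty pieces contribute nothing to the cost and can be deleted. Then $l_i\le|w_i|$, and an optimal decomposition has $m\le\sum_i|w_i|$. If $r$ exceeds $\sum_i|w_i|$ the answer is trivially yes: take one piece per string, in which case $m=n$ and each $p_i$ is the constant $i$, which is non-decreasing. So the certificate has size polynomial in the input, and verification is clearly polynomial.

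\textbf{{\sf NP}-hardness.} Given an instance $(S,r)$ of the optimal alignment problem, output $(S,r)$ unchanged. By Lemma~\ref{lem:cost-align-decompose}, $c(S)\le r$ holds if and only if $c_d(S)\le r$ does. Since the optimal alignment problem is {\sf NP}-hard by Lemma~\ref{lem:align-nphard-app}, so is deciding $c_d(S)\le r$.

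\textbf{Main obstacle.} The only delicate point is that both directions of Lemma~\ref{lem:cost-align-decompose} must hold exactly, not up to a constant. The converse construction in that lemma builds one alignment tuple per character of $x_k$, placing the character in string $i$ exactly when $k$ lies in the range of $p_i$. This is a valid alignment only if each $x_k$ is used at most once per string; a non-decreasing $p_i$ could repeat an index, as in $p_i(1)=p_i(2)=k$.

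I would repair this by restricting, without loss of generality, to strictly increasing $p_i$. This costs nothing, since a repeated piece inside one string can be replaced by a fresh copy; that increases $m$ but must not increase the optimum beyond the alignment value. If the repair does not go through cleanly, I would fall back on the reduction from shortest common supersequence directly. A supersequence $s$ of length $m$ yields a decomposition into $m$ single-character pieces with strictly increasing $p_i$, of cost $m$. Conversely, an optimal decomposition with strictly increasing maps, read in piece order, concatenates to a common supersequence of the same length.
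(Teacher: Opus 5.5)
Your diagnosis of the obstacle is right, and it breaks the route you share with the paper. The paper gets the corollary just by combining Lemma~\ref{lem:cost-align-decompose} with Lemma~\ref{lem:align-nphard-app}, but with non-decreasing $p_i$, Lemma~\ref{lem:cost-align-decompose} is false. For $S=\{aa\}$, take $x_1=a$ and $p_1(1)=p_1(2)=1$, so $c_d(S)=1$. Each alignment tuple of a single string is one symbol, so $c(S)=2=\mathrm{SCS}(S)$. Thus $(\{aa\},1)$ is a yes-instance for $c_d$ and a no-instance for alignment, and the identity map is not a reduction.

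Your repair fails on the same example. Replacing a repeated use of $x_k$ by a fresh copy adds $|x_k|$ to the cost, so strictly increasing maps are not without loss of generality. The claim that this ``must not increase the optimum beyond the alignment value'' is exactly what $\{aa\}$ refutes. The fallback has the same hole: an optimal decomposition need not have strictly increasing maps, so it need not read off a supersequence of equal length.

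The missing idea is to reduce only from instances where reuse is impossible. If $p_i(j)=p_i(j+1)=k$ with $x_k\neq\lambda$, then $x_kx_k$ is a factor of $w_i$. So on square-free inputs, after deleting empty pieces, every $p_i$ is strictly increasing, and $x_1x_2\cdots x_m$ is a common supersequence of length $\sum_k|x_k|$. Together with the single-character decomposition of any supersequence, this gives $c_d(S)=\mathrm{SCS}(S)$ on such inputs.

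You then need {\sf NP}-hardness of SCS on square-free inputs. The binary-alphabet result of R\"aih\"a and Ukkonen does not give this directly, since every binary string of length at least $4$ contains a square. A clean source is two-letter strings $ab$ with $a\neq b$:
\begin{itemize}
\item Start from a loopless digraph $G=(V,A)$ without isolated vertices, and take one string $ab$ per arc $(a,b)$.
\item In any supersequence, the letters occurring only once induce an acyclic subgraph, so the letters occurring at least twice form a feedback vertex set.
\item Conversely, writing a feedback vertex set $F$, then $V\setminus F$ in topological order, then $F$ again gives a supersequence.
\end{itemize}
Hence $\mathrm{SCS}(S)=|V|+\mathrm{FVS}(G)$. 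Directed feedback vertex set is {\sf NP}-hard, already on digraphs obtained from undirected graphs by replacing each edge with two opposite arcs (via vertex cover).

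Your membership-in-{\sf NP} argument is fine as written.
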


Moreover, it is possible to show that the
language expression cost for a finite language~$S$
is exactly the string decomposition cost of the same language.
\begin{lemma}\label{lem:cost-re-decompose}
For a finite set~$S$,
$c_{E}(S) = c_d(S)$.
\end{lemma}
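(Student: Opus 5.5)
We prove the two inequalities $c_d(S) \le c_E(S)$ and $c_E(S) \le c_d(S)$ separately. Since Lemma~\ref{lem:cost-align-decompose} already gives $c_d(S)=c(S)$, either direction may be routed through alignments whenever that is more convenient.

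\textbf{Upper bound $c_E(S)\le c_d(S)$.} Start from an optimal decomposition $x_1,\ldots,x_m$ with non-decreasing maps $p_i$. We treat the maps as strictly increasing, since repeated use of the same block can be handled by duplicating that block (which is what an optimal decomposition effectively does). The associated regex is
\[
R = x_1?\,x_2?\,\cdots\,x_m?,
\]
where each $x_k$ is written as a literal and the option is applied to the whole block. Its symbol count is $\sum_k |x_k| = c_d(S)$, because operators are not counted. Every $w_i$ lies in $L(R)$: choose $x_k$ exactly when $k$ is in the range of $p_i$ and $\lambda$ otherwise. Moreover $L(R)$ is finite, as the definition of $c_E(S)$ requires. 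Hence $c_E(S)\le c_d(S)$.

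\textbf{Lower bound $c_d(S)\le c_E(S)$.} Take a regex $R$ with $L(R)\supseteq S$, $L(R)$ finite, and $c_E(R)=c_E(S)$. Finiteness means any starred or unbounded subexpression denotes $\{\lambda\}$ or $\emptyset$, so we may eliminate stars and positive closures and unfold intervals without increasing the symbol count in any harmful way. Write the symbol occurrences of $R$ in left-to-right order as $a_1 a_2\cdots a_C$. The key observation, proved by structural induction on $R$ (union, concatenation, option), is this: for every $w\in L(R)$, a derivation of $w$ uses a set of symbol occurrences of $R$ that is increasing in position, and reading them in order spells $w$. Under union the chosen branch's positions all lie within that branch; under concatenation the positions of the left factor precede those of the right. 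Consequently $a_1\cdots a_C$ is a common supersequence of $S$ of length $C$. Each position $k$ then yields an alignment tuple whose non-$\lambda$ entries all equal $a_k$, and dropping tuples that are entirely $\lambda$ leaves an alignment of cost at most $C$. So $c(S)\le C$, and Lemma~\ref{lem:cost-align-decompose} gives $c_d(S)\le C = c_E(S)$.

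\textbf{Main obstacle.} The delicate step is the lower bound for the richer syntax: intervals $R\{m,n\}$, stars, and character classes. The intended definition of $c_E$ presumably counts an interval as its unrolled copies and a class as one symbol per character, so I would first normalize $R$ to a star-free, interval-free form with the same or smaller cost. The same issue arises for the definition of $c_d$: if it genuinely allows a single occurrence of a block to be reused via a non-decreasing $p_i$, then the upper-bound construction needs $x_k^*$ restricted, and that construction would have to be checked against the finiteness requirement on $L(R)$. I expect this reconciliation of definitions to be the main obstacle; the induction itself is routine.
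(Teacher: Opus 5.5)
Your proof is correct under the reading the paper's own arguments require. It differs from the paper mainly in the lower bound.

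\textbf{How the paper argues.} It inducts on $|S|$. If the maps $p_i$ of an optimal decomposition fall into groups with pairwise disjoint ranges, it joins the corresponding subsets by a top-level union and applies the induction hypothesis. Otherwise it builds $e_1\cdots e_m$ with $e_j=x_j$ or $x_j\texttt{?}$, which is your construction. It then asserts the reverse inequality without proof (``the substrings $x_j$ must appear at least once''). Its inductive step therefore only yields $c_E(S)\le c_d(S)$: the identity $L(R_1|R_2)=L(R_1)\cup L(R_2)$ justifies the union step in one direction only.

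\textbf{What you do differently.} You note that the union case is unnecessary, since $x_1\texttt{?}\cdots x_m\texttt{?}$ costs $\sum_k|x_k|$ whether or not the maps split. You also give an actual proof of $c_d(S)\le c_E(S)$. In a star-free regex, a derived word uses distinct symbol occurrences in left-to-right order, so the occurrence sequence is a common supersequence of $S$ of length $c_E(S)$. You could even bypass Lemma~\ref{lem:cost-align-decompose} here: the occurrences, taken as one-symbol blocks with the embeddings as the $p_i$, already form a decomposition of cost at most $c_E(S)$.

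\textbf{What each approach buys.} The paper's case split links the cost to the Partition/Segment dichotomy of \resyn. Yours gives a complete argument, and it shows that all three costs equal the shortest-common-supersequence length, which makes the subsequent NP-completeness theorem immediate.

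\textbf{Two corrections.} The ``main obstacle'' you describe cannot be resolved by a cleverer construction, because the statement is false outside certain conventions. Those conventions should be fixed explicitly rather than argued for.
\begin{itemize}
\item Duplicating a reused block increases $\sum_k|x_k|$, so it is not what an optimal decomposition ``effectively does''. With literally non-decreasing maps, $S=\{\texttt{aa}\}$ has $c_d(S)=1$ (one block \texttt{a} used twice). Your own linearization argument, however, forces $c_E(S)=2$.
\item Unrolling intervals does increase the symbol count. If intervals or classes were counted cheaply, the lemma fails: \texttt{(a|b)\{2\}} gives $c_E(\{\texttt{ab},\texttt{ba}\})\le 2<3=c_d$, and \texttt{[ab]} gives $c_E(\{\texttt{a},\texttt{b}\})=1<2$.
\end{itemize}
So state up front that the $p_i$ are strictly increasing. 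The paper's proof of Lemma~\ref{lem:cost-align-decompose} already assumes this, since it emits each block at most once per string. Also state that regexes are built from symbols, $\lambda$, concatenation, union and option, with intervals and classes counted in expanded form. With those conventions, your argument goes through as written.
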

\begin{proof}
Let $S = \{w_1, w_2, \ldots, w_n\}$ be given.
It is easy to show that,
if $S$ is either of $\{w\}$ or  $\{\lambda, w\}$,
$c_{E}(S) = c_d(S) = |w|$.

First, suppose the mappings~$p_i$ for $S$'s string decomposition
can be partitioned into $q \ge 2$ collections~$Z_1, \ldots, Z_q$
such that any two mappings in different collection have disjoint ranges.
In other words, each collection~$Z_j$ spells a subset of $S$, exactly $\{ w_i \mid p_i \in Z_j \}$.
Then, there is a corresponding partition of $S$ composed with $q$ subsets~$S_1, \ldots, S_q$,
and $c_{E}(S) = \sum_{j}{c_{E}(S_j)} = \sum_j{c_d(S_j)} = c_d(S)$
if we assume that the claim holds for smaller sets.
Note that the first equation comes from the fact that $L(R_1) \cup L(R_2) = L(R_1 | R_2)$
and all $S_j$ are disjoint,
and the third is due to the disjointness of union of function ranges in each $Z_j$.

Thus, it is enough to consider the case where we cannot partition $S$ as above.
In this case,
we can build a regex~$R$ so that the decomposed substrings~$x_k$~($1 \le k \le m$) appear at most once in $R$:
$R = e_1 e_2 \cdots e_m$,
where $e_j = x_j$ if $j$ is in the range of every $p_i$ ($1 \le i \le n$)
and $e_j = x_j\texttt{?}$ otherwise.
% if e_i = e_i+1, e_i = e_i*, e_j = lambda 
Then, the language expression cost is bounded above by the sum of lengths of decomposed substrings: $c_{E}(S) \le c_{E}(R) = \sum_k\{|x_k|\}$.
Moreover, it is also true that the substrings~$x_j$ must appear at least once.
Thus, the language expression cost~$c_{E}(S)$ and
the string decomposition cost~$c_d(S)$ are equal.
\end{proof}

From the above lemmas,
we can finally show that the concise regex problem is intractable in practice.
\begin{theorem}
The problem of inferring a concise regex by decomposition is {\sf NP}-complete.
\end{theorem}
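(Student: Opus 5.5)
The theorem is the appendix restatement of Theorem~\ref{thm:concise-regex}: deciding whether $c_{E}(S)\le r$ is {\sf NP}-complete. I would prove it by chaining the equalities already established and then transferring both membership and hardness through them.

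\textbf{Step 1 (cost identity).} Combine Lemma~\ref{lem:cost-re-decompose} ($c_{E}(S)=c_d(S)$) with Lemma~\ref{lem:cost-align-decompose} ($c_d(S)=c(S)$). This gives $c_{E}(S)=c(S)$ for every finite $S$, which is exactly Lemma~\ref{lem:cost-equiv}. Consequently, for every instance $(S,r)$ the question ``$c_{E}(S)\le r$?'' has the same answer as ``$c(S)\le r$?''.

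\textbf{Step 2 (hardness).} The identity map $(S,r)\mapsto(S,r)$ is a polynomial-time reduction from the optimal alignment problem to the concise regex problem. Lemma~\ref{lem:align-nphard-app} shows the optimal alignment problem is {\sf NP}-hard, via the reduction from shortest common supersequence. Hence the concise regex problem is {\sf NP}-hard.

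\textbf{Step 3 (membership in {\sf NP}).} Here I would avoid guessing a regex and checking $L(R)\supseteq S$ directly. Instead I use Step 1 again: a certificate is an alignment $t_1\cdots t_m$ with $m\le r$. We may assume $m\le\sum_{w\in S}|w|$, since each tuple contains at least one non-$\lambda$ entry. So the certificate has polynomial size. Checking that each tuple carries a single symbol $\sigma$, and that the column-wise concatenations spell each $w_i$, takes linear time. By $c_{E}(S)=c(S)$, a valid certificate exists if and only if $c_{E}(S)\le r$. As a direct alternative, one could guess the regex $e_1\cdots e_m$ built in the proof of Lemma~\ref{lem:cost-re-decompose} and test membership of each $w\in S$ in polynomial time.

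\textbf{Main obstacle.} The argument is only as strong as Lemma~\ref{lem:cost-re-decompose}, so that is the step I would scrutinize. Its lower bound, that every regex covering $S$ costs at least $c_d(S)$, is argued only informally. The definition of $c_{E}(S)$ restricts $L(R)$ to be finite, so there are no stars, and the regex consists only of concatenation, union and option over literal symbols. I would make the lower bound rigorous by induction on the syntax tree. For each $w\in S$, fix a parse of $w$ in $L(R)$; the symbol occurrences of $R$ used in these parses, read in order, form a decomposition of the required monotone form with total length at most the number of symbols in $R$.

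The case needing care is union nodes, whose branches must be laid out sequentially so that each $p_i$ stays non-decreasing. This is possible because a single parse uses only one branch of each union, so placing the branches one after another preserves order. Once this is checked, Steps 1 through 3 go through unchanged.
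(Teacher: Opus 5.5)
Your proposal is correct and follows the paper's route: the paper's proof is just the identity $c_{E}(S)=c_d(S)=c(S)$ from Lemmas~\ref{lem:cost-re-decompose} and~\ref{lem:cost-align-decompose}, combined with the {\sf NP}-completeness of optimal alignment (Lemma~\ref{lem:align-nphard-app}); your alignment certificate for membership and your parse-based proof of $c_{E}(S)\ge c(S)$ make explicit what the paper leaves implicit. These additions are sound in the classical model the appendix tacitly uses (single symbols with concatenation, union and option), but your claim that finiteness of $L(R)$ alone forces this form does not hold for the grammar of Appendix~\ref{sec:regex_def}, where intervals such as $a\{k\}$ cover $\{a^k\}$ at cost well below $c(\{a^k\})=k$, and character classes cause the same problem --- the paper's own lemmas depend on that restriction just as much.
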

\begin{proof}
The optimal alignment problem is {\sf NP}-complete (Lemma~\ref{lem:align-nphard-app}) and $c_{E}(S) = c_d(S) = c(S)$ (Lemmas~\ref{lem:cost-align-decompose} and \ref{lem:cost-re-decompose}).
%computing $c_{E}(S)$ is {\sf NP}-complete.
\end{proof}
Also note that this is a special case of the problem tackled with \RecuRegex{}
with the empty negative set.
\section{Detailed Dataset Construction and Example Generation}
\label{app:data_details}

\paragraph{Canonicalization and Structural Filtering.}
All collected regexes were first standardized using our Regex Canonicalization pipeline (see Section~\ref{sec:canonicalization}). To prevent data leakage, we enforced a strict separation between training and evaluation sets based on structural equivalence (AST signatures), ignoring specific literals. Any regex in the training set sharing the same structure as those in the test benchmarks was removed.

\paragraph{Sub-Regex Extraction.}
To augment the training data, we recursively extracted all valid sub-regexes from complex expressions. For example, from \texttt{($R_1$|$R_2$)*}, we included the original and all constituent sub-expressions \{\texttt{($R_1$|$R_2$)}, \texttt{$R_1$}, \texttt{$R_2$}, \dots\}, increasing dataset size by about 28\%.

\paragraph{Filtering Criteria.}
\begin{itemize}
    \item \textbf{Length:} Regexes exceeding 110 characters were excluded.
    \item \textbf{Complexity:} Regexes with a top-level Union operator having more than 10 branches were discarded.
\end{itemize}

\paragraph{Train/Validation Split.}
After filtering, the dataset was split into training and validation sets (9:1 ratio), maintaining AST signature disjointness.

\subsection{Example Generation}

\paragraph{Positive Examples.}
For each regex, up to 10 positive examples were generated using the \texttt{IntXeger}\footnote{\url{https://github.com/k15z/IntXeger}} library, with a 60-second timeout per regex. At least 2 positive examples were required. For Union operators, at least one string per branch was sampled; for unbounded repetitions, the maximum count was limited to 20.

\paragraph{Negative Examples.}
Negative examples were generated using a mutation-based strategy: random edit operations (insertion, deletion, substitution) were applied to positive examples.
Some regexes (e.g., \texttt{.*}) may have no negative examples.

\paragraph{Substring Expansion.}
Positive example strings were recursively split according to the regex structure to generate substring sets, used for training decomposition modules. This expansion was applied only when the top-level operator was Concat or Union.

\paragraph{Hold-out Set and Redundancy Removal.}
For evaluation, an additional disjoint set of examples (Hold-out set) was generated. Internal redundancy in the test set was eliminated by filtering overlapping AST signatures, ensuring each test instance represents a unique structural pattern.

\paragraph{Final Dataset Statistics.}
Table~\ref{tab:comprehensive_stats} summarizes the comprehensive statistics of the final datasets constructed through these processes, detailing the distribution of top-level AST node types, substring expansion ratios, and the properties of the generated positive and negative strings.

\begin{table*}[ht]
\centering
\begin{tabular}{@{}llccccc@{}}
\toprule
\multirow{2.5}{*}{\textbf{Category}} & \multirow{2.5}{*}{\textbf{Metric / Feature}} & \multirow{2.5}{*}{\textbf{Train}} & \multirow{2.5}{*}{\textbf{Valid}} & \multicolumn{3}{c}{\textbf{Test Benchmarks}} \\ \cmidrule(l){5-7} 
 & & & & \textbf{\sregex} & \textbf{\snort} & \textbf{\regexlib} \\ \midrule
\multicolumn{2}{@{}l}{Instance Count (Unique)} & 1.4M & 32K & 334 & 352 & 1,752 \\ \midrule
\multicolumn{2}{@{}l}{Substring Ratio} & 63.71\% & 1.35\% & 0.00\% & 0.00\% & 0.00\% \\ \midrule
\multirow{4}{*}{Top-level Op.}
& Character Class & 12.32\% & 6.74\% & 0.00\% & 0.00\% & 1.77\% \\ 
& Concat & 30.90\% & 87.69\% & 99.10\% & 98.01\% & 74.83\% \\
& Repetition & 53.38\% & 2.86\% & 0.60\% & 1.99\% & 6.45\% \\ 
& Union & 3.40\% & 2.71\% & 0.30\% & 0.00\% & 16.95\% \\ 
\midrule
\multirow{3}{*}{AST Depth} & Mean & 2.47 & 3.28 & 3.38 & 3.16 & 3.94 \\ 
& Median & 2.0 & 3.0 & 3.0 & 3.0 & 4.0 \\
& Max Depth & 25 & 19 & 5 & 7 & 10 \\
\midrule
\multirow{2}{*}{Positive Str} & Count (Mean) & 8.16 & 9.63 & 9.91 & 9.70 & 9.88 \\
& Length (Mean) & 14.35 & 23.97 & 15.61 & 27.36 & 24.40 \\
\midrule
\multirow{2}{*}{Negative Str} & Count (Mean) & 3.62 & 9.83 & 10.00 & 10.00 & 9.98 \\
& Length (Mean) & 22.31 & 23.99 & 16.44 & 27.19 & 25.26 \\
\bottomrule
\end{tabular}
\caption{Comprehensive Dataset Comparison: Train, Validation, and Test Benchmarks breakdown. The structural gap between synthetic (\sregex) and real-world (\regexlib) datasets is highlighted by the Top-level Operator distribution and AST Depth.}
\label{tab:comprehensive_stats}
\end{table*}

\section{Prompt Engineering Details}
\label{app:prompt_details}
We utilized a unified zero-shot prompting strategy for both the \texttt{gpt-oss-120b} and \texttt{GPT-5} baselines, as shown in Figure~\ref{fig:prompt_template}.
The template includes strict technical constraints to ensure compatibility with the \texttt{google-re2} engine.

\begin{figure*}[t]
    \centering
    \begin{tcolorbox}[colback=gray!5!white, colframe=black, boxsep=1mm, left=1mm, right=1mm, top=1mm, bottom=1mm, title=\small{Zero-shot Prompt Template (Shared by gpt-oss-120b \& GPT-5)}]
    \scriptsize
    \begin{verbatim}
You are an expert in regular expressions using the google-re2 engine.
Your task is to synthesize a regex that matches all positive examples and rejects all negative examples.
Output ONLY the regex pattern without any explanation or markdown formatting.

Technical Constraints:
1. Engine: Use google-re2 syntax.
2. Dot-All Mode: Assume 'dot nl' mode is enabled ('.' matches newline).
3. Whitespace: The shorthand '\s' must NOT match vertical tabs (\v).

Given the following examples:

[Positive Examples]
{positive_examples}

[Negative Examples]
{negative_examples}

Generate the google-re2 compatible regex pattern.
    \end{verbatim}
    \end{tcolorbox}
    \vspace{-0.2cm}
    \caption{The zero-shot prompt template used for both gpt-oss-120b and GPT-5 baselines.}
    \label{fig:prompt_template}
\end{figure*}
\section{Detailed Training Configurations}
\label{app:training_details}
We implemented the \resyn framework using PyTorch.
All models share a hidden size of $256$ with $8$ attention heads.
Regarding data coverage, \setregex and the \router were trained on the full dataset to learn global representations, whereas the \partitioner and \segmenter were trained exclusively on instances with top-level Union and Concatenation operators, respectively.

All experiments were conducted on a server equipped with four NVIDIA RTX A6000 GPUs.
The character and string encoders each consist of $2$ layers, whereas the \segmenter employs a $4$-layer encoder-decoder architecture.

We trained all modules using the AdamW optimizer with a batch size of $64$ and a learning rate of $5 \times 10^{-4}$.
The schedule included a linear warmup of $1,000$ steps followed by cosine annealing, with gradient clipping set to $1.0$.
Training proceeded for up to $100$ epochs, utilizing an early stopping strategy with a patience of $10$ epochs based on validation loss.

All modules were optimized using Negative Log-Likelihood (NLL) loss.
Specifically, the \router employed a class-weighted NLL ($w_c \propto 1 / N_c$) to mitigate label imbalance across decomposition strategies.

\end{document}